\newtheorem{theorem}{Theorem} 
\newtheorem{lemma}{Lemma} 
\newtheorem{remark}{Remark}
\numberwithin{equation}{section}
\newcommand{\diag}{\text{diag}}
\newcommand{\trace}{\text{Tr}\,}
\newcommand{\E}{\mathbb{E}}
\newcommand{\innerproduct}[1]{ \langle #1 \rangle_{HS} }
\newcommand{\posdef}{\mathcal{S}^n_{+}}
\newcommand{\xmu}{\mathcal{X}_{\mu}}
\newcommand{\mmu}{\mathcal{M}_{\mu}}
\newcommand{\diagonal}[1]{\diag \left(  #1 \right) }
\newcommand{\mualphaone}{\mathcal{M}_{1,\mu, \alpha} }
\newcommand{\mualphatwo}{\mathcal{M}_{2,\mu, \alpha} }
\begin{document}
	\title{Complex phase retrieval from subgaussian measurements}
	\author{Felix Krahmer, Dominik St\"oger}
	\maketitle


\begin{abstract}
Phase retrieval refers to the problem of reconstructing an unknown vector $x_0 \in \mathbb{C}^n$ or $x_0 \in \mathbb{R}^n $ from $m$ measurements of the form $y_i = \big\vert \langle \xi^{\left(i\right)}, x_0 \rangle \big\vert^2 $, where $ \left\{ \xi^{\left(i\right)} \right\}^m_{i=1} \subset \mathbb{C}^m  $ are known measurement vectors. While Gaussian measurements allow for recovery of arbitrary signals provided the number of measurements scales at least linearly in the number of dimensions, it has been shown that ambiguities may arise for certain other classes of measurements  $ \left\{ \xi^{\left(i\right)} \right\}^{m}_{i=1}$ such as Bernoulli measurements or Fourier measurements. In this paper, we will prove that even when a subgaussian vector $ \xi^{\left(i\right)} \in \mathbb{C}^m $ does not fulfill a small-ball probability assumption, the \textit{PhaseLift} method is still able to reconstruct a large class of signals $x_0 \in \mathbb{R}^n$ from the measurements. This extends recent work by Krahmer and Liu from the real-valued to the complex-valued case. However, our proof strategy is quite different and we expect some of the new proof ideas to be useful in several other measurement scenarios as well. We then extend our results $x_0 \in \mathbb{C}^n $ up to an additional assumption which, as we show, is necessary.
\end{abstract}

\section{Introduction}\label{section:introduction}
Phase retrieval refers to the problem of reconstructing an unknown vector $x_0 \in \mathbb{C}^n$ from $m$ measurements of the form
\begin{equation}\label{equ:measurements}
y_i = \vert \langle \xi^{\left(i\right)}, x_0 \rangle \vert^2 + w_i \quad \quad \left(\text{where } i \in \left[m\right] \right),
\end{equation}
where the $\xi^{\left(i\right)} \in \mathbb{C}^n$ are known measurement vectors and $w_i \in \mathbb{R}$ represents additive noise. Such problems are ubiquituous in many areas of science and engineering such as X-ray crystallography \cite{harrison1993phase,millane1990phase}, astronomical imaging \cite{fienup1987phase}, ptychography \cite{rodenburg2008ptychography}, and quantum tomography \cite{kueng2017low}.\\
The foundational papers \cite{candes2013phaselift,demanet2014stable,candes2014solving} proposed to reconstruct $x_0$ via the PhaseLift method, a convex relaxation of the original problem.
These papers have triggered many follow-up works since they were the first to establish rigorous recovery guarantees under the assumption that the measurement vectors $\xi^{\left(i\right)}$ are sampled uniformly at random from the sphere. Since then several papers have analyzed scenarios where the measurement vectors possess a significantly reduced amount of randomness, in particular spherical designs \cite{gross2015partial} and coded diffraction patterns \cite{candes2015phase,gross2017improved}. However, the theoretical results for coded diffraction patterns rely on the assumption that the modulus of the illumination patterns is varying. Indeed, it was shown in \cite{fannjiang1} that for certain illumination patterns with constant modulus ambiguities can arise, i.e., it is not possible to determine $x_0$ uniquely from the measurements $y_i$. In fact, such ambiguities can already arise in much simpler settings, where the measurement vectors $ \left( \xi^{\left(i\right)} \right)$ are i.i.d. subgaussian. For example, consider the case that $\xi^{\left(i\right)} = \left(  \varepsilon^{\left(i\right)}_{1} , \ldots, \varepsilon^{\left(i\right)}_{n} \right) $, where the $  \varepsilon^{\left(i\right)}_{j} $ are i.i.d. Rademacher random variables. That is, they only take the values $+1$ and $-1$ each with probability $\frac{1}{2}$. In this case the vector $x_0:=e_1=\left(1, 0, \ldots, 0\right)$ can never be distinguished from the vector $ \tilde{x}_0:=e_2=\left(0, 1, \ldots, 0\right)$. Note that in this scenario it holds that $\mathbb{E} \left[ \big\vert \xi^{\left(i\right)}_j \big\vert^4 \right]  = \mathbb{E} \left[ \big\vert \xi^{\left(i\right)}_j \big\vert^2 \right] $ and, hence, the vector $ \xi^{\left(i\right)} $ does not fulfill a small-ball probability assumption, which means that there is no constant $c>0$ such that for all $\varepsilon>0$ and for all vectors $x  $ it holds that
\begin{equation}
\mathbb{P} \left( \vert \langle x, \xi^{\left(i\right)}  \rangle \vert \le \varepsilon \Vert x \Vert  \right) \le c\varepsilon.
\end{equation}
When the signals are complex even additional classes of ambiguities can arise. For example, when the measurement vectors $ \xi^{\left(i\right)} $ are real, any signal $x$ and its complex-conjugate signal $ \overline{x} $ will result in identical observations.\\

For these reasons, previous works on phase-retrieval from subgaussian measurements (see, e.g., \cite{goldsmith1}) work with real signals and require that all entries of the vector $  \xi^{\left(i\right)} $ fulfill
\begin{equation}\label{ineq:nosmallball}
\mathbb{E} \left[ \big\vert \xi^{\left(i\right)}_j \big\vert^4 \right]  > \mathbb{E} \left[ \big\vert \xi^{\left(i\right)}_j \big\vert^2 \right]
\end{equation}
for all $ j \in \left[n\right] $ or make even stronger assumptions.\\
The only exception is \cite{krahmer2018phase} which shows for the real-valued case ($x_0 \in \mathbb{R}^n$ and $\xi^{\left(i\right)} \in \mathbb{R}^n $) PhaseLift recovers a large class of signals from subgaussian measurements even if estimates of the type \eqref{ineq:nosmallball} are not satisfied. More precisely, one obtains that all signals $x_0$ whose peak-to-average power ratio satisfies a mild bound of the form
\begin{equation} 
\frac{\Vert x_0 \Vert_{\infty}}{\Vert x_0 \Vert} \le \mu < 1
\end{equation}
for some absolute constant $\mu>0$, can be recovered with high probability as long as $ m \gtrsim n $.  However, as the approach in \cite{krahmer2018phase} is intrinsically based on arguments in \cite{shahar_eldar} it cannot be generalized to the complex case in a straightforward manner. This paper provides an analysis both for real-valued and complex-valued signals. We believe that this understanding will be of importance for the subsequent study of structured scenarios such as coded diffraction patterns, which are also intrinsically complex in nature.\\

While the proofs in previous papers \cite{candes2015phase,gross2015partial,gross2017improved,krahmer2018phase} relied on the construction of a so-called dual certificate, our paper will employ a more geometric approach based on Mendelson's small ball method \cite{koltchinskii2015bounding,mendelson2014learning}. This is motivated by recent work \cite{kueng2017low,kabanava2016stable,krahmer2019convex}, which showed that a geometric analysis based on the descent cone of the trace norm can often yield additional insights compared to an approach based on dual certificates. 

For the problem studied in this paper, however, the small-ball method cannot be applied directly to the entire descent cone or the entire  cone of directions in which positive semidefiniteness is preserved. Rather we divide the latter cone into two parts: One that contains all the problematic cases, but is small, and one that is larger, but easier to analyze. Then we control one of these cones using a restricted isometry property and one via the small-ball method.

We think that this novel viewpoint and also some of the techniques developed in this paper will be useful for the analysis of other interesting measurement scenarios, such as the case of heavy-tailed measurement vectors $\xi^{\left(i\right)} $ or the case that $\xi^{\left(i\right)}$ has only entries $0$ and $1$.\\

\section{Background and main results}

\subsection{Notation}

$\mathcal{S}^n$ denotes the vector space of all Hermitian matrices in $\mathbb{C}^{n \times n} $. By $\mathcal{S}^n_{+} \subset \mathcal{S}^n $ we will denote the set of all positive definite Hermitian matrices. For $A,B \in \mathcal{S}^n$ the Hilbert-Schmidt inner product is defined by $\innerproduct{A,B}:= \trace \left(A^* B\right) $. The corresponding norm will be denote by $ \Vert \cdot \Vert_{HS}$. For a matrix $Z \in \mathcal{S}^n $ we will denote their eigenvalues by $\lambda_1 \left(Z\right), \lambda_2 \left(Z\right), \ldots, \lambda_n \left(Z\right) $, which are assumed to be arranged in decreasing order, i.e., $ \lambda_1 \left(Z\right) \ge \lambda_2 \left(Z\right) \ge \ldots \ge \lambda_n \left(Z\right) $. If no confusion can arise, we will suppress the dependence on $Z$ and write $\lambda_i  $ instead of $\lambda_i  \left(Z\right) $. By $\Vert Z \Vert_1$ we will denote the Schatten-$1$ norm of $Z$, i.e. $ \Vert Z \Vert_1 := \sum_{i=1}^{n} \vert \lambda_i \left(Z\right) \vert $. By $\diag \left(Z\right) \in \mathcal{S}^n$ we denote the matrix, which we obtain by setting all off-diagonal entries of $Z$ equal to zero. We will write $a \lesssim b$ or $b \gtrsim a $ if there is a universal constant $C>0$ such that $ a \le Cb $.
\subsection{PhaseLift}
The PhaseLift method was first introduced in \cite{candes2013phaselift}. In this paper we focus on a variant \cite{candes2014solving,demanet2014stable} based on the observation that the measurements $y_i$ can be rewritten in the form
\begin{equation}
y_i = \trace \left(\xi^{\left(i\right)} \left( \xi^{\left(i\right)} \right)^*  X_0  \right)+w_i,
\end{equation}
where $X_0 = x_0 x^*_0 $ is a rank-$1$ matrix encoding the signal to be recovered up to the true inherent phase ambiguity.
From this observation, PhaseLift relaxes the constraint that $X_0$ is of rank $1$ to obtain the optimization problem
\begin{equation}\label{opt:intern1}
\begin{split}
\text{minimize } \quad & \sum_{i=1}^{m} \big\vert \trace \left( \xi^{\left(i\right)} (\xi^{\left(i\right)})^* X  \right) -y_i \big\vert\\
\text{ such that} \quad & X \in \mathcal{S}^n_{+}.
\end{split}
\end{equation}
In order to simplify notation we introduce the linear operator $\mathcal{A}: \mathcal{S}^n \rightarrow \mathbb{R}^m $ as
\begin{equation}\label{definition:Aop}
\mathcal{A} \left(Z\right) \left(i\right) := \innerproduct{ \xi^{\left(i\right)} (\xi^{\left(i\right)})^*, Z }.
\end{equation}
Hence, setting $y:= \left( y_1, \ldots, y_m \right) \in \mathbb{R}^m $, \eqref{opt:intern1} can be rewritten as 
\begin{equation}\label{opt:SDP}
\begin{split}
\text{minimize } \quad & \Vert \mathcal{A} \left(X\right) -y \Vert_{\ell_1}\\
\text{such that} \quad & X \in \mathcal{S}^n_{+}.
\end{split}
\end{equation}
We note that while understanding the relaxation \eqref{opt:SDP} is an important benchmark approach and can be solved in polynomial time, it is typically not practical for applications, as lifting increases the number of optimization variables. For this reason, a very active line of research study recovery guarantees for algorithms that operate in the natural parameter domain such as alternating minimization (see, e.g., \cite{netrapalli2013phase,waldspurger2018alternating}), gradient-descent based formulations (see, e.g., \cite{candes_wirtinger,chen_wirtinger,soltanolkotabi2019structured,wright_phaseretrieval,chen2018gradient}), and \textit{anchored regression} \cite{bahmani2017phase,bahmani2017relax,bahmani2017anchored,goldstein2018phasemax}. 
However, most of these guarantees have been shown under the assumption that the measurement vectors $\left\{ \xi^{\left(i\right)} \right\}_{i=1}^m $ are sampled i.i.d. from the unit sphere, so it will be a natural follow-up of this work to study to which extent our results generalize to the more practical nonconvex algorithms.
In particular, most reconstruction guarantees for these non-convex approaches require an appropriate initialization. For this reason, one needs to study which initialization schemes work for the measurements considered in this paper. A natural approach will be to try spectral initializations and recent generalizations that have been shown to be feasible for a basically minimal number of measurements \cite{lu2017phase,mondelli2020fundamental,dudeja2020analysis,luo2019optimal}. We expect that the analysis provided in this paper will prove useful for this endeavour as the spectral initialization is somewhat connected to trace-norm minimization.

\subsection{Subgaussian measurements}\label{section:subgaussian}
We consider random measurement vectors $ \left\{ \xi^{\left(i\right)} \right\}^m_{i=1} $ given as independent copies of a random vector $\xi $, whose entries $\xi_j$ are assumed to be i.i.d. subgaussian random variables with parameter $K$, expectation $\E \left[\xi_j\right] =0 $, and variance $ \E \left[ \vert \xi_j \vert^2 \right]  =1$. Recall that a random variable $X$ is subgaussian with parameter $K$, if and only if 
\begin{equation}
\inf \left\{  t>0: \  \mathbb{E} \left[ \exp \left( X^2 / t^2 \right) \right]  \le 2  \right\} \le K < + \infty.
\end{equation}
It is well known (see, e.g., \cite{vershynin2016high})  that from this definition it follows for any (measurable) random variable $X$ that
\begin{equation}\label{ineq:subgaussianaux1}
\Vert X \Vert_{L_p} \lesssim \sqrt{p} K.
\end{equation}
Since $ \Vert \xi_1  \Vert_{L_2}^2   = \mathbb{E} \left[ \vert \xi_1 \vert^2 \right] =1 $ inequality \eqref{ineq:subgaussianaux1} immediately implies that
\begin{equation}\label{lemma:Kbound}
K \gtrsim 1.
\end{equation}
Moreover, it is well known (see, e.g., \cite{vershynin2016high}) that for all $ x\in \mathbb{C}^n$ the random variable $ \langle x,\xi \rangle $ is subgaussian with parameter $K \Vert x \Vert $.

\subsection{Previous work}
A number of previous works have studied phase retrieval with subgaussian measurements in the real-valued setting , i.e., $x_0 \in \mathbb{R}^n$ and $\xi \in \mathbb{R}^n$. For measurements fulfilling  $ \mathbb{E} \left[ \big\vert \xi_j \big\vert^4 \right]  > \mathbb{E} \left[ \big\vert \xi_j \big\vert^2 \right] $, \cite{goldsmith1} showed that PhaseLift admits order optimal uniform recovery guarantees.\footnote{In \cite{goldsmith1} instead of \eqref{opt:SDP} the original PhaseLift approach as in \cite{candes2013phaselift} is analysed.} Without the assumption $ \mathbb{E} \left[ \big\vert \xi_j \big\vert^4 \right]  > \mathbb{E} \left[ \big\vert \xi_j \big\vert^2 \right] $, in \cite{krahmer2018phase} the following result was proven, again for the real-valued case.
\begin{theorem}\label{thm:felixykai}\cite[Theorem V.1]{krahmer2018phase}
Let $\xi = \left( \xi_1, \ldots, \xi_n \right) \in \mathbb{R}^n $ be a random vector with i.i.d. subgaussian entries. Then there exist constants $C_1$, $C_2$, $C_3$, and $0 < \mu < 1 $, which depend only on the distribution of $\xi_1$, such that whenever
\begin{equation}
m \ge C_1 n,
\end{equation}
the following statement holds with probability at least $1-\exp \left(- C_2 n \right)$: For all signals $x_0 \in \mathbb{R}^n$ with $\Vert x_0 \Vert_{\infty} \le \mu \Vert x_0 \Vert $ and all noise vectors $ w \in \mathbb{R}^m $ any minimizer of \eqref{opt:SDP} fulfills
\begin{equation}
\Vert \hat{X} - x_0 x^*_0 \Vert_{HS} \le C_3 \frac{\Vert w \Vert_{\ell_1}}{m}.
\end{equation}
\end{theorem}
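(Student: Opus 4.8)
\emph{Proof strategy.} The plan is to pass, in the standard way, from the statement about minimizers to a uniform lower bound on $\mathcal{A}$. If $\hat X$ is any minimizer of \eqref{opt:SDP} and $X_0 = x_0 x_0^*$, optimality gives $\Vert \mathcal{A}(\hat X) - y \Vert_{\ell_1} \le \Vert \mathcal{A}(X_0) - y \Vert_{\ell_1} = \Vert w \Vert_{\ell_1}$, so by the triangle inequality $\Vert \mathcal{A}(\hat X - X_0) \Vert_{\ell_1} \le 2 \Vert w \Vert_{\ell_1}$; moreover $\hat X \succeq 0$ forces $H := \hat X - X_0$ into the tangent cone $T_{X_0} := \{ H \in \mathcal{S}^n : \langle u, H u \rangle \ge 0 \text{ for all } u \perp x_0 \}$, since $X_0 + t H = (1-t) X_0 + t \hat X \succeq 0$ for $t \in [0,1]$. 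Hence it suffices to prove that, with probability at least $1 - \exp(-C_2 n)$, one has $\Vert \mathcal{A}(H) \Vert_{\ell_1} \ge c\, m \Vert H \Vert_{HS}$ simultaneously for all $H \in T_{X_0}$ and all flat $x_0$; the conclusion then follows with $C_3 = 2/c$.

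Fix $H \in T_{X_0}$ with $\Vert H \Vert_{HS} = 1$ and write its Jordan decomposition $H = H_+ - \lambda v v^*$ with $H_+ \succeq 0$, $\Vert v \Vert = 1$, $\lambda \ge 0$. As $H$ is positive semidefinite on the hyperplane $x_0^\perp$, Cauchy interlacing shows $H$ has at most one negative eigenvalue, so its negative part is genuinely the rank-one matrix $\lambda v v^*$ and $\lambda = \Vert H_- \Vert_{HS}$. I would split $T_{X_0}$ according to whether $\trace(H_+) \ge 6 \lambda$ (the ``large, easy'' cone $\mathcal{C}_1$) or $\trace(H_+) < 6 \lambda$ (the ``small, problematic'' cone $\mathcal{C}_2$). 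On $\mathcal{C}_1$ nothing beyond a restricted isometry property is needed: for $m \ge C_1 n$, with probability $1 - \exp(-cn)$ one has $\tfrac12 \Id \preceq \tfrac1m \sum_i \xi^{(i)} (\xi^{(i)})^* \preceq \tfrac32 \Id$ (subgaussian sample covariance), and since each $\langle \xi^{(i)} (\xi^{(i)})^*, H_+ \rangle$ and each $\langle \xi^{(i)} (\xi^{(i)})^*, v v^* \rangle$ is nonnegative, the reverse triangle inequality yields
\[
\Vert \mathcal{A}(H) \Vert_{\ell_1} \ge \Big\langle \textstyle\sum_i \xi^{(i)} (\xi^{(i)})^*, H_+ \Big\rangle - \lambda \Big\langle \textstyle\sum_i \xi^{(i)} (\xi^{(i)})^*, v v^* \Big\rangle \ge \tfrac{m}{2} \trace(H_+) - \tfrac{3m}{2} \lambda \ge \tfrac{m}{4} \trace(H_+) ,
\]
and on $\mathcal{C}_1$ one has $\Vert H \Vert_{HS}^2 = \Vert H_+ \Vert_{HS}^2 + \lambda^2 \le 2 \trace(H_+)^2$, so $\Vert \mathcal{A}(H) \Vert_{\ell_1} \gtrsim m$. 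This part uses no flatness.

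The delicate cone is $\mathcal{C}_2$. There $\Vert H \Vert_* = \trace(H_+) + \lambda < 7 \lambda \le 7$, so $\mathcal{C}_2 \cap \{ \Vert H \Vert_{HS} = 1 \}$ lies in the low-complexity set $\{ \Vert H \Vert_* \le 7 \}$, while $1 = \Vert H_+ \Vert_{HS}^2 + \lambda^2 < 37 \lambda^2$ forces the rank-one negative part to have size $\lambda$ bounded below by an absolute constant. On such a set I would run Mendelson's small-ball method \cite{mendelson2014learning,koltchinskii2015bounding}: with the stated probability,
\[
\inf_{H \in \mathcal{C}_2,\ \Vert H \Vert_{HS} = 1} \tfrac1m \Vert \mathcal{A}(H) \Vert_{\ell_1} \ge \tau \cdot \inf_{H} \mathbb{P}\big( | \langle \xi \xi^*, H \rangle | \ge 2\tau \big) - \tfrac{2}{m} \mathbb{E} \sup_{H} \Big| \sum_i \varepsilon_i \langle \xi^{(i)} (\xi^{(i)})^*, H \rangle \Big| .
\]
The multiplier term is at most $\tfrac{14}{m}\, \mathbb{E} \Vert \sum_i \varepsilon_i \xi^{(i)} (\xi^{(i)})^* \Vert_{op} \lesssim \sqrt{n/m}$ (after handling the subexponential tails of $\xi^{(i)} (\xi^{(i)})^*$ by truncation and matrix Bernstein), negligible once $m \ge C_1 n$ with $C_1$ large. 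For the small-ball probability, Paley--Zygmund together with the fourth-moment bound $\mathbb{E}[\langle \xi \xi^*, H \rangle^4] \lesssim 1$ (Hanson--Wright, using $\Vert H \Vert_{HS} = 1$ and $|\trace H| \le 7$ on $\mathcal{C}_2$) reduces everything to a uniform lower bound $\mathbb{E}[\langle \xi \xi^*, H \rangle^2] \ge c_0 > 0$ on $\mathcal{C}_2$.

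Here lies the crux, and the main obstacle. A direct computation gives $\mathbb{E}[\langle \xi \xi^*, H \rangle^2] = 2 \Vert H \Vert_{HS}^2 + \trace(H)^2 + (\mathbb{E} \xi_1^4 - 3)\Vert \diag(H) \Vert_{HS}^2$, which, when $\mathbb{E} \xi_1^4 < 3$ (e.g. Rademacher entries), degenerates to $0$ on trace-free diagonal matrices; since always $\mathbb{E}\xi_1^4 \ge 1$, positivity of this form reduces to showing that flatness of $x_0$ keeps $\mathcal{C}_2$ uniformly away from the diagonal, i.e. $\Vert \diag(H) \Vert_{HS} \le 1 - \delta_0$ on $\mathcal{C}_2 \cap \{\Vert H \Vert_{HS} = 1\}$ for all flat $x_0$ and all $n$, with $\delta_0$ depending only on the distribution of $\xi_1$. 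The mechanism is that a matrix in $T_{X_0}$ that is nearly diagonal has, by interlacing, essentially one negative diagonal entry, and a test-vector computation --- choosing $u \perp x_0$ concentrated against that coordinate --- bounds the magnitude of that entry by $O(\mu^3)$ using only $\Vert x_0 \Vert_\infty \le \mu \Vert x_0\Vert$; for $\mu$ a sufficiently small constant (depending on the distribution) this contradicts the lower bound on $\lambda$ established above, so $H$ cannot be close to diagonal. Turning this heuristic into a genuinely $n$-uniform quantitative statement (rather than a compactness argument) is the real work. With it in place, $\mathbb{E}[\langle \xi \xi^*, H \rangle^2] \ge c_0 > 0$ on $\mathcal{C}_2$, the small-ball probability is bounded below by an absolute $q > 0$, and combining the bounds on $\mathcal{C}_1$ and $\mathcal{C}_2$ --- all the sets above constrain $H$ only through $x_0^\perp$ and through $\Vert H \Vert_*$, so a single application suffices and no net over signals is required --- yields $\Vert \mathcal{A}(H) \Vert_{\ell_1} \gtrsim m \Vert H \Vert_{HS}$ on $T_{X_0}$ for every flat $x_0$, completing the proof.
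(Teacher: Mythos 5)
You are proving a theorem that this paper cites from \cite{krahmer2018phase} rather than proves; the original proof there is a dual-certificate construction building on \cite{shahar_eldar}, which the authors explicitly state does not transfer to the complex case. What you propose instead reproduces, in outline, the \emph{new} geometric strategy this paper develops for its Theorems~\ref{theorem:mainresult} and~\ref{theorem:mainresult2}, which subsume the real case (note $\Vert\cdot\Vert_{HS}\le\Vert\cdot\Vert_1$). Every structural step you take has a counterpart in the paper: passing to $\lambda_{\min}(\mathcal{A},\mmu)$ (Lemma~\ref{lemma:chandrasekaranvariant}); the at-most-one-negative-eigenvalue observation (Lemma~\ref{lemma:onenegativeeigen}); the two-cone split $\mualphaone\cup\mualphatwo$ with a threshold $\alpha$ (your $\trace(H_+)\ge 6\lambda$ vs.\ $<6\lambda$, identical in spirit though with a different constant); the positivity/RIP bound on the easy cone (Lemma~\ref{lemma:oneestimate}); Mendelson's small-ball method on the hard cone (Lemma~\ref{thm:smallballmethod}); Paley--Zygmund together with a Hanson--Wright fourth-moment bound (Lemma~\ref{lemma:PRsmallballprobabilities}); the real specialization of the second-moment identity (Lemma~\ref{lemma:PRsecondmoments} with $\mathbb{E}\xi_1^2=1$ gives exactly your $2\Vert H\Vert_{HS}^2+\trace(H)^2+(\mathbb{E}\xi_1^4-3)\Vert\diag(H)\Vert_{HS}^2$); and the Rademacher complexity bound for the near-low-rank cone (Lemma~\ref{lemma:PRmeanwidth}). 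So your proposal is essentially correct and coincides with the route this paper actually takes, as distinguished from the original route in~\cite{krahmer2018phase}.

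The one place you yourself flag as ``the real work''---an $n$-uniform quantitative statement that flatness of $x_0$ keeps $\mathcal{C}_2$ bounded away from the diagonal---is exactly where the paper's Lemma~\ref{lemma:key1} is cleanest, and your ``interlacing plus test vector'' heuristic is weaker and, as written, does not clearly yield uniformity in $n$. The paper does not argue by contradiction against a hypothetical nearly-diagonal matrix. It orthogonally splits $Z=Z_1+Z_2$ with $Z_1=-\lambda x_0x_0^*+ux_0^*+x_0u^*$, $\langle u,x_0\rangle=0$, and $Z_2x_0=0$, and then shows directly: (i) the tangency constraint forces $Z_2\succeq 0$, whence the bottom eigenvector sees a nonnegative contribution from $Z_2$ and so $\lambda_n(Z)\ge -\Vert Z_1\Vert_{HS}$, i.e.\ $\Vert Z_2\Vert_{HS}^2=\Vert Z\Vert_{HS}^2-\Vert Z_1\Vert_{HS}^2\le\Vert Z\Vert_{HS}^2-\lambda_n(Z)^2$; (ii) $\Vert\diag(Z_1)\Vert_{HS}\le 3\mu\Vert Z\Vert_{HS}$ straight from $\Vert x_0\Vert_\infty\le\mu$; and (iii) the cone condition gives $-\lambda_n(Z)\ge\frac{1}{1+\alpha^{-1}}\Vert Z\Vert_1\ge\frac{1}{1+\alpha^{-1}}\Vert Z\Vert_{HS}$, so that $\Vert\diag(Z)\Vert_{HS}\le\bigl(\sqrt{1-(1+\alpha^{-1})^{-2}}+3\mu\bigr)\Vert Z\Vert_{HS}$. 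This single inequality is the $n$-uniform statement you need, with no compactness, and it is what ultimately fixes $\mu=1/81$ at $\alpha=4/5$. If you replace the informal paragraph at the end of your sketch with this decomposition, your proposal becomes a faithful reorganization of the paper's Lemmas~\ref{lemma:oneestimate} through~\ref{lemma:PRmeanwidth}.
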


\section{Main results}
\subsection{Complex signals and complex measurement vectors}
In Theorem \ref{thm:felixykai} both the signal $x_0$ and the measurement vectors $\xi^{\left(i\right)}$ are assumed to be real. While for the measurement vectors this is often too restrictive, the signal $x_0$ is indeed typically real-valued in applications. This important special case will be discussed in Section \ref{section:realsignalcompmeas} below. Nevertheless, we find it still interesting 
from a mathematical point of view under which assumptions recovery is possible for complex-valued signals. Our first result deals with this case.\\

As we have explained in Section \ref{section:introduction}, there are subgaussian distributions for which we cannot achieve uniform recovery of all signals $ x_0 \in \mathbb{C}^n$. For this reason, we define for all $0 < \mu \le 1 $ the set of all signals of mildly bounded peak-to-average power ratio
\begin{equation}\label{def:xmu}
\xmu:= \left\{ x_0 \in \mathbb{C}^n \setminus \left\{ 0\right\} :  \Vert x_0 \Vert_{\infty} \le \mu  \Vert x_0 \Vert    \right\}.
\end{equation}
Indeed, this restriction is very mild as $\mu$ will not depend on the dimension, whereas for a Gaussian random signal the ratio $\frac{\Vert x \Vert_{\infty}}{\Vert x \Vert} $ would scale like $ \sqrt{\frac{\log n}{n}}  $.
Now we are prepared to state the following theorem, which is our first main result.
\begin{theorem}\label{theorem:mainresult}
	Let the observation vector $y$ be given as in \eqref{equ:measurements}, where the random measurement vectors $ \left\{ \xi^{\left(i\right)} \right\}_{i=1}^m $ are defined as in Section \ref{section:subgaussian}. Assume that $ \vert \mathbb{E} \left[  \xi_1^2  \right] \vert^2 \le  1-\beta  $ for some $\beta \in \left(0,1\right) $ and that
	\begin{equation}\label{ineq:condition44}
	m \ge C_1 \frac {K^{20}}{\beta^{5/2}} n.
	\end{equation}
	Then for some probability parameter $p_{\beta}=1- \mathcal{O} \left( \exp \left( \frac{-m \beta^4}{C_2 K^{16}} \right) \right)$ the following two statements hold.
	\begin{enumerate}
		\item With probability at least $p_{\beta}$ one has that  for all vectors $x_0 \in \mathcal{X}_{1/81} $ and any noise vector $w\in \mathbb{R}^m$ any solution $\hat{X}$ of \eqref{opt:SDP} satisfies
		\begin{equation}\label{ineq:bound1}
		\Vert \hat{X} - x_0 x^*_0 \Vert_{1} \le C_3 \frac{K^8}{m \beta^{5/2}} \Vert w \Vert_{\ell_1}.
		\end{equation}
		\item If, in addition, $\mathbb{E} \left[ \vert \xi_1 \vert^4 \right] \ge 1+\beta $, then with probability at least $p_{\beta}$ inequality \eqref{ineq:bound1} holds for all $x_0 \in \mathbb{C}^n \setminus \left\{ 0 \right\} $.
	\end{enumerate}
	Here $C_1$, $C_2$, and $C_3$ are universal constants.
\end{theorem}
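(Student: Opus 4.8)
### Proof proposal

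The plan is to verify the hypotheses of Mendelson's small-ball method on the relevant cone of feasible perturbations, splitting that cone in two as the introduction promises. Write $H = \hat X - x_0x_0^*$ for an arbitrary feasible perturbation. Since $\hat X \succeq 0$ and $x_0x_0^* \succeq 0$, the direction $H$ lies in the set $\mathcal{T} := \{ Z \in \mathcal{S}^n : Z = Y - x_0 x_0^*,\ Y \succeq 0\}$, and a standard convex-geometry argument (as in the Mendelson/Tropp framework used in \cite{kueng2017low,kabanava2016stable}) reduces the error bound \eqref{ineq:bound1} to a lower bound of the form $\Vert \mathcal{A}(Z)\Vert_{\ell_1} \gtrsim \frac{m\beta^{5/2}}{K^8}\Vert Z\Vert_1$, uniformly over $Z$ in the (compact) intersection of the cone generated by $\mathcal{T}$ with the Schatten-$1$ sphere. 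The key point, driven by the ambiguity examples of Section~\ref{section:introduction}, is that the quadratic form $\E|\mathcal{A}(Z)(i)|$ degenerates precisely along matrices that look like $e_je_j^* - e_ke_k^*$, and these are exactly the matrices whose diagonal carries the entire mass; all the subtlety is in isolating them.

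So the first main step is to decompose the cone: fix a small threshold $\gamma=\gamma(\beta)$ and split according to whether $\Vert \diag(Z)\Vert_1 \ge \gamma \Vert Z\Vert_1$ (the ``problematic'' part $\mathcal{M}_{1,\mu,\alpha}$) or $\Vert\diag(Z)\Vert_1 < \gamma\Vert Z\Vert_1$ (the ``generic'' part $\mathcal{M}_{2,\mu,\alpha}$). On the generic part, where $Z$ has substantial off-diagonal mass, I expect the measurement operator to act like a restricted isometry: one shows $\frac1m\Vert\mathcal{A}(Z)\Vert_{\ell_1}$ concentrates around $\E|\langle \xi\xi^*,Z\rangle|$ uniformly, and that this expectation is bounded below by a constant (depending on $\beta$) times $\Vert Z\Vert_1$. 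Here the hypothesis $|\E[\xi_1^2]|^2 \le 1-\beta$ enters: it guarantees that the "anomalous" second moment $\E[\langle\xi,z\rangle^2]$ cannot cancel the genuine second moment $\E|\langle\xi,z\rangle|^2 = \Vert z\Vert^2$, keeping the relevant Paley–Zygmund/small-ball constant bounded away from zero. The uniform concentration over this cone is obtained by a chaining or VC-type argument together with the subgaussian moment bound \eqref{ineq:subgaussianaux1}, and this is where the polynomial-in-$K$, polynomial-in-$1/\beta$ sample complexity \eqref{ineq:condition44} and the exponential probability $p_\beta$ come from.

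The second main step handles the problematic part $\mathcal{M}_{1,\mu,\alpha}$, where $Z$ is essentially diagonal. Here the constraint structure is crucial and this is the step I expect to be the main obstacle. The matrix $Z = Y - x_0x_0^*$ with $Y\succeq0$ cannot be an arbitrary diagonal traceless matrix: its negative part is controlled by $x_0x_0^*$, which by the assumption $x_0 \in \mathcal{X}_{1/81}$ (i.e. $\Vert x_0\Vert_\infty \le \mu\Vert x_0\Vert$ with $\mu = 1/81$) is spread out. The strategy is to show that such a $Z$ must in fact have a non-negligible off-diagonal component or a non-negligible imaginary part, i.e. that the truly degenerate directions $e_je_j^*-e_ke_k^*$ are geometrically excluded from the feasible cone by the peak-to-average bound on $x_0$; then one is essentially back in the regime of the previous step, possibly with a smaller but still positive lower bound. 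Quantifying ``non-negligible'' and threading the constants $\mu=1/81$, $\gamma$, and $\beta$ through so that both cones admit the same final bound is the delicate bookkeeping. Finally, for part (2) of the theorem: when $x_0$ is genuinely complex the above geometric exclusion can fail (e.g. $x_0$ could be nearly sparse), so one additionally needs the lower bound to hold along diagonal directions directly; this is exactly where $\E|\xi_1|^4 \ge 1+\beta$ is used, since it forces $\E|\langle\xi\xi^*, D\rangle| \gtrsim \beta\Vert D\Vert_1$ for diagonal $D$, recovering the small-ball estimate on all of $\mathcal{M}_{1,\mu,\alpha}$ and hence on the whole feasible cone for arbitrary $x_0\in\mathbb{C}^n\setminus\{0\}$.
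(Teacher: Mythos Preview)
Your high-level plan --- split the feasible cone in two and combine a restricted-isometry argument with Mendelson's small-ball method --- matches the paper's, but you have the splitting criterion and the roles of the two tools reversed, and this creates a genuine gap.

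The paper splits $\mathcal{M}_\mu$ according to the \emph{eigenvalue structure} of $Z$: since every admissible $Z$ has at most one negative eigenvalue $\lambda_n$, one sets $\mathcal{M}_{1,\mu,\alpha} = \{-\lambda_n \le \alpha\sum_{i<n}\lambda_i\}$ (approximately positive semidefinite) and $\mathcal{M}_{2,\mu,\alpha} = \{-\lambda_n > \alpha\sum_{i<n}\lambda_i\}$ (approximately low rank, since then $\Vert Z\Vert_1 \le (1+\alpha^{-1})\Vert Z\Vert_{HS}$). The RIP argument is applied to the \emph{first} cone: because the measurement matrices $\xi^{(i)}(\xi^{(i)})^*$ are PSD, for approximately PSD $Z$ one has $\vert\langle\xi^{(i)}(\xi^{(i)})^*,Z\rangle\vert \ge \langle\xi^{(i)}(\xi^{(i)})^*,Z\rangle$, and summing over $i$ together with a rank-one RIP gives $\tfrac1m\Vert\mathcal{A}(Z)\Vert_{\ell_1} \gtrsim \Vert Z\Vert_1$. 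The small-ball method is applied to the \emph{second} cone, and it is only there --- for the approximately low-rank directions --- that the peak-to-average bound on $x_0$ is invoked, to show $\Vert\diag(Z)\Vert_{HS} < \Vert Z\Vert_{HS}$ and hence to lower-bound the second moment $\mathbb{E}\vert\xi^*Z\xi\vert^2$ feeding into Paley--Zygmund.

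Your proposal instead splits by diagonal mass and argues that nearly diagonal $Z$ are ``geometrically excluded'' by the flatness of $x_0$. This fails outright: every PSD matrix lies in the feasible cone (take $Y = Z + x_0x_0^*$), in particular purely diagonal ones such as $\Id$, and no peak-to-average assumption on $x_0$ changes that. What rescues these directions in the paper is not exclusion but the positivity of the measurements, via the RIP step above. Conversely, your ``RIP/concentration on the off-diagonal-heavy part'' is also problematic: that set is not low-complexity (it contains high-rank matrices with $\Vert Z\Vert_1 \asymp \sqrt{n}\,\Vert Z\Vert_{HS}$), so the Rademacher complexity term in Mendelson's bound cannot be controlled at the target sample size $m\asymp n$. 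The correct dichotomy is PSD-dominated versus low-rank, not diagonal versus off-diagonal; the diagonal/off-diagonal consideration (and with it both hypotheses on $\beta$) enters only as a second-order ingredient inside the small-ball estimate on the low-rank cone.
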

The first case of Theorem \ref{theorem:mainresult}, where one makes no assumption on the fourth moment of $\xi_1 $, can be applied also to certain scenarios, where  unique recovery is not possible without this assumption. One important example is that the entries $\xi_i$ are drawn from $ \left\{ z\in \mathbb{C}: \ \Vert z \Vert =1  \right\} $ uniformly at random. Note that these measurements will always yield the same observations $y$ for the two signals
\begin{align}
x_1&=\left(1, 0, \ldots, 0\right),\\
\tilde{x}_1&=\left(0, 1, \ldots, 0\right).
\end{align}
Such very sparse signals are exactly prevented by Condition 1, so there is no contradiction to the theorem's conclusion that unique recovery can be achieved via \eqref{opt:SDP} for all signals $x_0$ such that $\Vert x_0 \Vert_{\infty} \le \frac{1}{82} \Vert x_0 \Vert $.\\

Note that in the second scenario, where assumptions on the fourth moment of $\xi_1$ are available, we obtain a uniform recovery result over all $x_0 \in \mathbb{C}^n $. In the real-valued case a similar result has been shown in \cite{goldsmith1}.

\begin{remark}\label{remark1}
An assumption of the form $ \vert \mathbb{E} \left[  \xi_1^2  \right] \vert^2 \le  1-\beta $ cannot be avoided as the following argument shows. Indeed, if $ \vert \mathbb{E} \left[  \xi_1^2  \right] \vert^2 =1$ the assumption $ \mathbb{E} \left[ \vert   \xi_1^2 \vert  \right] =1  $ implies that $ \xi= \lambda \tilde{\xi} $ almost surely, where $ \lambda \in \left\{ z\in \mathbb{C}: \Vert z \Vert=1 \right\}$ is fixed and $\tilde{\xi} \in \mathbb{R} $ is a real random variable. We observe that
\begin{align}
\big\vert \langle \xi, x_0  \rangle \big\vert^2 &= \big\vert \langle \lambda \tilde{ \xi} , x_0 \rangle \big\vert^2\\
&= \big\vert \langle  \tilde{ \xi} , x_0 \rangle \big\vert^2\\
&= \big\vert \langle  \tilde{ \xi} , \overline{x_0} \rangle \big\vert^2\\
&=\big\vert \langle \xi, \overline{x_0}  \rangle \big\vert^2.
\end{align}
Consequently, $x_0$ and its complex-conjugate $\overline{x_0}$ will always lead to the same measurements.
\end{remark}

\subsection{Real signals and complex measurement vectors}\label{section:realsignalcompmeas}
We have seen in Remark \ref{remark1} that the assumption  $ \vert \mathbb{E} \left[  \xi_1^2  \right] \vert^2 \le  1-\beta $ is necessary to distinguish between a signal $x_0$ and $\overline{x_0}$. However, if, as in many practical applications, it is known a priori that the signal $x_0$ is real-valued then this ambiguity cannot arise and we can uniquely recover without additional assumptions via the following natural variant of the PhaseLift method, where we restrict the search space to real-valued matrices.
\begin{equation}\label{opt:SDP_2}
\begin{split}
\text{minimize } \quad & \Vert \mathcal{A} \left(X\right) -y \Vert_{\ell_1}\\
\text{such that} \quad & X \in \mathcal{S}^n_{+} \cap \mathbb{R}^{n \times n}.
\end{split}
\end{equation}
The following theorem shows that in this scenario the assumption $\vert \mathbb{E} \left[  \xi_1^2  \right] \vert^2 \le  1-\beta $ is indeed not necessary.

\begin{theorem}\label{theorem:mainresult2}
	Let the observation vector $y$ be given as in \eqref{equ:measurements}, where the random measurement vectors $ \left\{ \xi^{\left(i\right)} \right\}_{i=1}^m $ are as defined in Section \ref{section:subgaussian}. 
	Then  the following two statements hold. 
	
	\begin{enumerate}
	\item Assume that
	\begin{equation}\label{ineq:condition44_2}
	m \ge C_1 {K^{20}} n.
	\end{equation}
	Then, with probability at least $1- \mathcal{O} \left( \exp \left( \frac{-m }{C_2 K^{16}} \right) \right)$ one has that for all vectors $x_0 \in \mathcal{X}_{1/81}\cap \mathbb{R}^n$ and any noise vector $w\in \mathbb{R}^m$ any solution $\hat{X}$ of \eqref{opt:SDP_2} satisfies
	\begin{equation}\label{ineq:bound1_2}
	\Vert \hat{X} - x_0 x^*_0 \Vert_{1} \le C_3 {K^8}m  \Vert w \Vert_{\ell_1}.
	\end{equation}
	
	\item If, in addition, it holds that $\mathbb{E} \left[ \vert \xi_1 \vert^4 \right] \ge 1+\beta $ for some  $\beta \in (0,1] $, then, under the refined assumption
	\begin{equation}\label{ineq:condition44_2}
		m \ge C_1 \frac {K^{20}}{\beta^{5/2}} n, 
	\end{equation}
	 one has a more general bound. Namely it holds that with probability at least $1- \mathcal{O} \left( \exp \left( \frac{-m \beta^4}{C_2 K^{16}} \right) \right)$  and for all vectors $x_0 \in \mathbb{R}^n \backslash \left\{0\right\}$, again for arbitrary noise vectors $w\in \mathbb{R}^m$, any solution $\hat{X}$ of \eqref{opt:SDP_2} satisfies
		\begin{equation}\label{ineq:bound1_2}
			\Vert \hat{X} - x_0 x^*_0 \Vert_{1} \le C_3 \frac{K^8}{m \beta^{5/2}} \Vert w \Vert_{\ell_1}.
		\end{equation}
	\end{enumerate}	
	Here $C_1$, $C_2$, and $C_3$ are universal constants.
\end{theorem}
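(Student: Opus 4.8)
The plan is to reduce Theorem~\ref{theorem:mainresult2} to the same geometric machinery that underlies Theorem~\ref{theorem:mainresult}, exploiting the fact that the search space in \eqref{opt:SDP_2} is smaller. Recall the standard reduction: if $\hat X$ solves \eqref{opt:SDP_2} then $H:=\hat X - x_0 x_0^*$ lies in the cone of feasible directions $\{Z \in \mathcal{S}^n \cap \mathbb{R}^{n\times n} : x_0 x_0^* + tZ \succeq 0 \text{ for some } t>0\}$, and by optimality $\Vert \mathcal{A}(H)\Vert_{\ell_1} \le 2\Vert w\Vert_{\ell_1}$. Hence it suffices to prove a lower bound of the form $\Vert \mathcal{A}(Z)\Vert_{\ell_1} \gtrsim \frac{m\beta^{5/2}}{K^8}\Vert Z\Vert_1$ uniformly over this real feasibility cone (or, in part~1, over the subcone arising from $x_0 \in \mathcal{X}_{1/81}$). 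The key observation is that the \emph{real} feasibility cone is contained in the complex one, so any such lower bound already established in the course of proving Theorem~\ref{theorem:mainresult} restricts for free; what remains is to handle the part of the cone that in the complex setting was ``problematic'' precisely because of the conjugation ambiguity of Remark~\ref{remark1}, and to show it is no longer problematic once we intersect with $\mathbb{R}^{n\times n}$.

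Concretely, I would follow the two-cone decomposition advertised in the introduction. Write the feasibility cone as a union of a ``large, easy'' piece on which $\mathcal{A}$ behaves like a restricted isometry (controlled via a RIP-type estimate that only needs the subgaussian moment assumptions, no small-ball and no fourth-moment gap), and a ``small, hard'' piece concentrated near the dangerous directions. For part~1, the hard piece is governed by Mendelson's small-ball method: one needs a uniform lower bound on the marginal $\mathbb{E}\,|\langle \xi^{(i)}(\xi^{(i)})^*, Z\rangle|$ (a small-ball / Paley--Zygmund step) together with a bound on the multiplier/empirical-process term $\sup_Z |\langle m^{-1}\sum \varepsilon_i \xi^{(i)}(\xi^{(i)})^*, Z\rangle|$ via a chaining/Gaussian-width estimate over the relevant cone intersected with the unit ball. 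The crucial point that must be checked here is that the troublesome rank-one directions $Z \sim uv^* + vu^*$ that had near-vanishing marginal in the complex case (corresponding to $x_0$ vs.\ $\overline{x_0}$) are excluded once $Z$ is real and $x_0$ has bounded peak-to-average ratio: the restriction $\Vert x_0\Vert_\infty \le \frac1{81}\Vert x_0\Vert$ forces the feasibility cone away from the coordinate-sparse directions, exactly as in the discussion following Theorem~\ref{theorem:mainresult}. For part~2, the extra hypothesis $\mathbb{E}|\xi_1|^4 \ge 1+\beta$ is what lets one drop the peak-to-average constraint entirely: this moment gap provides a uniform small-ball lower bound on $\mathbb{E}\,|\langle \xi^{(i)}(\xi^{(i)})^*, Z\rangle|$ over \emph{all} real feasibility directions (including the previously bad ones), so the same small-ball-method argument closes over all of $\mathbb{R}^n\setminus\{0\}$.

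The bookkeeping I would do explicitly: (i) state the RIP/large-cone lemma with the $K^{20}$ measurement count and the $\exp(-m/(C_2K^{16}))$ failure probability, tracking that in the no-fourth-moment case there is no $\beta$ (hence \eqref{ineq:condition44_2} without $\beta$) whereas in the fourth-moment case the small-ball constant degrades like $\beta$, producing the $\beta^{5/2}$ in the sample complexity and the $\beta^{5/2}$ in the denominator of \eqref{ineq:bound1_2}; (ii) assemble the two pieces by a union bound, obtaining $\Vert\mathcal{A}(Z)\Vert_{\ell_1} \gtrsim \frac{m\beta^{5/2}}{K^8}\Vert Z\Vert_1$ (resp.\ $\gtrsim \frac{m}{K^8}\Vert Z\Vert_1$ in part~1 with no $\beta$); (iii) combine with $\Vert\mathcal{A}(H)\Vert_{\ell_1}\le 2\Vert w\Vert_{\ell_1}$ to get \eqref{ineq:bound1_2}. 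I expect the main obstacle to be the small-ball (marginal lower bound) step in part~2: one must show that $\mathbb{E}\,\big|\langle \xi\xi^*, Z\rangle\big| \gtrsim \beta\,\Vert Z\Vert_1$ (or at least $\gtrsim\beta$ on the unit sphere of the cone) uniformly, and the only quantity that rescues the otherwise-degenerate directions $Z = \text{Re}(uv^*)$ is the fourth-moment excess — making this a delicate computation in the fourth and second moments of $\xi_1$, combined with a Paley--Zygmund inequality that requires a matching upper bound on $\mathbb{E}\,|\langle \xi\xi^*, Z\rangle|^2$ (which is again subgaussian, hence $\lesssim K^4 \Vert Z\Vert_1^2$ up to the diagonal/off-diagonal split). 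Once this marginal bound is in hand, the empirical-process part is routine chaining over a cone of Gaussian width $\lesssim \sqrt{n}$, and everything else is the standard PhaseLift-via-Mendelson assembly; so I would spend most of the write-up on isolating and proving the uniform marginal lower bound, and treat the chaining and the final combination as short corollaries of the corresponding steps in the proof of Theorem~\ref{theorem:mainresult}.
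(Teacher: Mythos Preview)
Your proposal is correct and follows essentially the same route as the paper: restrict the cone of admissible directions to $\mmu \cap \mathbb{R}^{n\times n}$, keep the RIP-type bound (Lemma~\ref{lemma:oneestimate}) for $\mualphaone$ unchanged, and replace Lemma~\ref{lemma:smallballapplied} by a real-valued variant (Lemma~\ref{lemma:smallballapplied_variant}) in which the key algebraic observation is that for real $Z$ one has $\sum_{i\ne j}\text{Im}(Z_{i,j})^2=0$ in the second-moment formula of Lemma~\ref{lemma:PRsecondmoments}, so the term requiring $|\mathbb{E}[\xi_1^2]|^2\le 1-\beta$ disappears and one may simply take $\beta=1$ in part~1. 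The remainder---Paley--Zygmund combined with a Hanson--Wright bound on the fourth moment, the Rademacher-complexity estimate of Lemma~\ref{lemma:PRmeanwidth}, and the final assembly via Lemma~\ref{lemma:chandrasekaranvariant}---is identical to the complex case.
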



\begin{remark}
In comparison to Theorem \ref{thm:felixykai} the probability bound in Theorem \ref{theorem:mainresult} and Theorem \ref{theorem:mainresult2} is slightly better, as it improves from $1-\exp \left(- \Omega \left( n\right) \right) $ to $ 1-\exp \left(- \Omega \left( m\right) \right)  $.
Moreover, note that in contrast to Theorem \ref{thm:felixykai} the dependence on the subgaussian distribution of $\xi$ is not hidden in the constants. Also note that in our result the dependence on $\beta$ is stated explicitly. However, we do not know whether these bounds are optimal with respect to $K$ and $\beta$.
\end{remark}

\section{Proof of main results}

\subsection{Proof of Theorem \ref{theorem:mainresult}}\label{subsection:proof1}

Our goal is to show that with high probability the matrix $x_0 x_0^*$ is close to the minimizer $\hat{X}$ of the expression $\Vert \mathcal{A}(W)-y\Vert_{\ell_1}$ over all $W\in \mathcal{S}^n_{+}$. A common proof strategy that we will also follow is to establish that all $X \in \mathcal{S}^n_{+}$ with 
\begin{equation}\label{equation:suffcondition1}
\Vert \mathcal{A}(X)-y\Vert_{\ell_1} \le \Vert \mathcal{A}(x_0 x_0^*)-y\Vert_{\ell_1} = \Vert w\Vert_{\ell_1}
\end{equation}
are sufficiently close to the true solution in $\Vert \cdot \Vert_{1} $-norm. More precisely, a sufficient condition for inequality \eqref{ineq:bound1} is that every $X$ fulfilling condition \eqref{equation:suffcondition1} satisfies
\begin{equation}\label{ineq:bound2}
\Vert X - x_0 x^*_0 \Vert_{1} \le C_3 \frac{K^8}{m \beta^{5/2}} \Vert w \Vert_{\ell_1}.
\end{equation} 
Setting $Z= X- x_0 x_0^*$,  equation \eqref{equation:suffcondition1} reads
\begin{equation}\label{equation:suffcondition2}
 \Vert \mathcal{A}(Z)-w\Vert_{\ell_1} \le  \Vert w\Vert_{\ell_1}.
\end{equation}
By the triangle inequality this implies that
\begin{equation}
\Vert \mathcal{A}(Z)\Vert_{\ell_1} < 2 \Vert w \Vert_{\ell_1}.
\end{equation} 
Hence, the upper bound \eqref{ineq:bound2} that we aim to establish directly follows from an appropriate lower bound for $\Vert \mathcal{A} \left(Z\right) \Vert_{\ell_1}/ \Vert Z \Vert_1 $.
Here $Z \in \mathcal{S}^n$ ranges over those matrices for which $x_0 x^*_0 + Z  $ is positive semidefinite. This set is convex, so it is locally well-approximated by a convex cone. To establish a uniform recovery result over all $x_0 \in \xmu $, we need to study the union of the corresponding cones as given by
\begin{equation}
\mmu := \text{cone} \left\{ Z \in \mathcal{S}^n  : \exists x_0\in \xmu \text{ such that } x_0x^*_0 +  Z\in \posdef   \right\}.
\end{equation}
We will refer to this set as the \emph{cone of admissible directions}.

With this notation, our proof strategy can be summarized as establishing a lower bound for 
\begin{equation}
\lambda_{\min} \left( \mathcal{A}, \mathcal{M}_{\mu} \right) := \underset{Z \in  \mathcal{M}_{\mu} \setminus \left\{ 0\right\}  }{\inf} \frac{\Vert \mathcal{A} \left(Z\right) \Vert_{\ell_1} }{\Vert Z \Vert_1},
\end{equation}
which in the literature is commonly referred to as the minimum conic singular value (see, e.g., \cite{tropp2015convex,krahmer2019convex}).
Except for the precise nature of the cone under consideration, this strategy is exactly analogous to a number of works in the recent literature on linear inverse problems \cite{chandrasekaran2012convex,kueng2017low}. In particular, the following lemma, which summarizes our motivating considerations above, can be seen as a variant of 
 \cite[Proposition 2.2]{chandrasekaran2012convex}.
\begin{lemma}\label{lemma:chandrasekaranvariant}
Let $\mathcal{A}$ be the operator defined in (\ref{definition:Aop}). Assume that $y= \mathcal{A} \left( x_0 x^*_0 \right) +w $. Then the minimizer $\hat{X}$ of \eqref{opt:SDP} satisfies
\begin{equation}
\Vert \hat{X}-x_0 x^*_0 \Vert_{1} \le \frac{2 \Vert w \Vert_{\ell_1}}{\lambda_{\min} \left( \mathcal{A}, \mathcal{M}_{\mu} \right) }.
\end{equation}
\end{lemma}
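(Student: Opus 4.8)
The plan is to run the standard comparison argument for convex recovery programs (in the spirit of \cite[Proposition 2.2]{chandrasekaran2012convex}), taking care that here the feasible set is the cone $\posdef$ rather than a norm ball and that the relevant set of directions is the cone of admissible directions $\mmu$ rather than a descent cone. I will use the hypothesis $x_0 \in \xmu$ (which is what makes the displacement $\hat X - x_0 x_0^*$ an admissible direction); since the asserted bound is trivial when $\hat X = x_0 x_0^*$, I may assume $Z := \hat X - x_0 x_0^* \neq 0$.

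First I would invoke feasibility and optimality: $x_0 x_0^* \in \posdef$ is feasible for \eqref{opt:SDP} and $\hat X$ is a minimizer, so, using $y = \mathcal{A}(x_0 x_0^*) + w$, one obtains $\Vert \mathcal{A}(\hat X) - y \Vert_{\ell_1} \le \Vert \mathcal{A}(x_0 x_0^*) - y \Vert_{\ell_1} = \Vert w \Vert_{\ell_1}$. Next I would write $\mathcal{A}(Z) = \bigl(\mathcal{A}(\hat X) - y\bigr) + w$ and apply the triangle inequality together with the previous step to get $\Vert \mathcal{A}(Z) \Vert_{\ell_1} \le \Vert \mathcal{A}(\hat X) - y \Vert_{\ell_1} + \Vert w \Vert_{\ell_1} \le 2 \Vert w \Vert_{\ell_1}$.

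Then I would observe that $Z \in \mmu \setminus \{0\}$: indeed $x_0 x_0^* + Z = \hat X \in \posdef$ and $x_0 \in \xmu$, so $Z$ lies in the set whose conic hull defines $\mmu$. Consequently, by the definition of the minimum conic singular value, $\Vert \mathcal{A}(Z) \Vert_{\ell_1} \ge \lambda_{\min}(\mathcal{A}, \mmu)\, \Vert Z \Vert_1$. Combining this with the bound from the previous step yields $\lambda_{\min}(\mathcal{A}, \mmu)\, \Vert \hat X - x_0 x_0^* \Vert_1 \le 2 \Vert w \Vert_{\ell_1}$, which rearranges to the claimed estimate (and is vacuous when $\lambda_{\min}(\mathcal{A},\mmu) = 0$, the right-hand side then being interpreted as $+\infty$).

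There is essentially no obstacle in this lemma itself: it is a soft consequence of convexity and the triangle inequality, and the only point needing (minor) care is the membership $Z \in \mmu$, which is exactly where the hypothesis $x_0 \in \xmu$ enters. All of the substantive difficulty is postponed to establishing a high-probability lower bound for $\lambda_{\min}(\mathcal{A}, \mmu)$, which is the content of the remainder of the section and is handled by splitting $\mmu$ into a small "bad" part controlled via a restricted isometry property and a larger "good" part controlled via Mendelson's small-ball method.
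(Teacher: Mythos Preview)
Your proposal is correct and follows essentially the same argument as the paper: the paper's proof is precisely the paragraph of ``motivating considerations'' preceding the lemma statement, namely comparing $\hat X$ to the feasible point $x_0 x_0^*$, applying the triangle inequality to bound $\Vert \mathcal{A}(Z)\Vert_{\ell_1}$ by $2\Vert w\Vert_{\ell_1}$, and then invoking the definition of the minimum conic singular value. You are also right that the membership $Z\in\mmu$ relies on $x_0\in\xmu$, a hypothesis the paper uses implicitly from the surrounding context.
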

In the following, our goal will be to derive an appropriate lower bound for $\lambda_{\min} \left( \mathcal{A}, \mathcal{M}_{\mu} \right)$. One difficulty in the analysis is that not all matrices belonging to $\mathcal{M}_{\mu} $ are positive semidefinite. Indeed, in this scenario one could use that for positive semidefinite matrices an approximate $\ell_1$ isometry holds (see, e.g. \cite[Section 3]{candes2013phaselift}). While not all matrices in $\mathcal{M}_{\mu}$ are positive semidefinite the following lemma states that each matrix belonging to $\mathcal{M}_{\mu} $ possesses at most one negative eigenvector.
\begin{lemma}\label{lemma:onenegativeeigen}
Suppose that $Z\in \mmu$. Then $Z$ has at most one strictly negative eigenvalue.
\end{lemma}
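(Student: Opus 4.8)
The plan is to unpack the definition of $\mmu$ and reduce the claim to an elementary statement about sums of a rank-one Hermitian perturbation. By definition, every $Z \in \mmu$ is a nonnegative scalar multiple of some $Z'$ with the property that $x_0 x_0^* + Z' \in \posdef$ for a suitable $x_0 \in \xmu$; since scaling by a nonnegative constant does not change the sign pattern of eigenvalues, it suffices to treat the case $Z = Z'$ directly, i.e., to show that if $x_0 x_0^* + Z \succeq 0$ for some $x_0 \neq 0$, then $Z$ has at most one strictly negative eigenvalue. Write $Z = x_0 x_0^* + Z - x_0 x_0^* = (x_0 x_0^* + Z) + (-x_0 x_0^*)$, expressing $Z$ as the sum of a positive semidefinite matrix $P := x_0 x_0^* + Z$ and the rank-one negative semidefinite matrix $-x_0 x_0^*$.

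The key step is then a standard eigenvalue interlacing argument. Since $-x_0 x_0^*$ is a rank-one perturbation (indeed a rank-one \emph{negative} semidefinite matrix), Weyl's inequalities — or more precisely the interlacing inequalities for the sum of a Hermitian matrix and a rank-one Hermitian matrix — give
\begin{equation}
\lambda_{k}(Z) = \lambda_k\big(P + (-x_0 x_0^*)\big) \ge \lambda_{k+1}(P)
\end{equation}
for all $k \in [n-1]$, where eigenvalues are in decreasing order. Since $P \succeq 0$ we have $\lambda_{k+1}(P) \ge 0$, hence $\lambda_k(Z) \ge 0$ for every $k \le n-1$. Therefore at most the smallest eigenvalue $\lambda_n(Z)$ can be strictly negative, which is exactly the claim. (The cleanest way to get the interlacing bound is the min-max characterization: $\lambda_{k+1}(P) = \min_{\dim V = n-k}\max_{v \in V, \|v\|=1} \langle v, Pv\rangle$, and restricting to subspaces $V$ contained in the hyperplane $x_0^\perp$ — of which there is at least a $(n-k)$-dimensional one as long as $k+1 \ge 1$... more carefully, intersect an optimal $(n-k)$-dimensional subspace for $P$ with $x_0^\perp$ to obtain a subspace of dimension $\ge n-k-1$ on which $\langle v, Pv\rangle = \langle v, Zv\rangle$, yielding $\lambda_{k+1}(Z) \le \lambda_{k+1}(P)$ reversed appropriately; one should pick the cleanest direction of the inequality and be careful with the off-by-one indexing.)

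I expect essentially no serious obstacle here: the only thing to be careful about is the direction and indexing of the interlacing inequality, i.e., making sure one invokes the variant of Cauchy interlacing / Weyl's inequality that controls eigenvalues of $P + (-x_0x_0^*)$ from below by the \emph{shifted} eigenvalues of $P$, rather than the reverse. An alternative, equally short route avoiding any named theorem: suppose for contradiction that $Z$ had two orthogonal eigenvectors $u_1, u_2$ with $\langle u_j, Z u_j \rangle < 0$. The span $W := \mathrm{span}\{u_1, u_2\}$ is two-dimensional, so it intersects the hyperplane $x_0^\perp$ in a subspace of dimension at least one; picking a unit vector $v$ in that intersection gives $\langle v, (x_0 x_0^* + Z) v \rangle = \langle v, Z v \rangle < 0$ (using $\langle v, x_0 \rangle = 0$ so that $\langle v, x_0 x_0^* v\rangle = 0$, together with the fact that $\langle v, Zv \rangle < 0$ since $v$ lies in the span of two negative directions — this last point needs the observation that on $W$ the quadratic form $\langle \cdot, Z \cdot\rangle$ is negative definite, which holds because $u_1, u_2$ are eigenvectors, not merely directions of negative energy). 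This contradicts $x_0 x_0^* + Z \succeq 0$, completing the proof. I would write up this second argument as it is self-contained, and keep the interlacing remark as a one-line alternative.
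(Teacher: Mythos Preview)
Your second argument --- picking a nonzero vector in the two-dimensional negative eigenspace that is orthogonal to $x_0$ and deriving a contradiction with $x_0 x_0^* + tZ \succeq 0$ --- is exactly the paper's proof. The interlacing route you sketch first is also valid (and your reduction to $t=1$ by scaling is clean), but the paper goes with the self-contained contradiction argument you yourself say you would write up.
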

\begin{proof}
Let $Z\in \mmu$. By definition of $ \mmu $ we can find $x_0 \in \xmu$ and $t>0 $ such that 
\begin{equation}\label{ineq:intern1}
x_0x^*_0 + tZ \in \mathcal{S}^n_{+}.
\end{equation}
Suppose now by contradiction that $Z$ has two (strictly) negative eigenvalues with corresponding eigenvectors $ z_1, z_2 \in \mathbb{C}^n$.  Then we can find a vector $u \in \text{span} \left\{ z_1, z_2 \right\} \backslash \left\{ 0 \right\}  $ such that $ \langle  u,x_0 \rangle =0 $. This implies that for any $ t >0 $ we have that
\begin{equation}
u \left(  x_0 x^*_0 + t Z \right) u^* = t u^*Zu < 0,
\end{equation} 
which is a contradiction to (\ref{ineq:intern1}).
\end{proof}

Recall that for a matrix $Z \in \mathcal{S}^n$ we denoted its eigenvalues by $\left\{ \lambda_i \left(Z\right) \right\}^n_{i=1} $ in decreasing order. By the previous lemma it holds that $ \lambda_i \left(Z\right)  \ge 0$ for all $ i \in \left[n-1\right] $ and all $Z \in \mathcal{M}_{\mu} $. For the proof we will partition $\mmu$ into two sets. Namely, for $ \alpha > 0 $ we define
\begin{align}
\mathcal{M}_{1,\mu, \alpha} &:= \left\{Z\in \mmu:  -\lambda_n \left(Z\right) \le \alpha  \sum_{i=1}^{n-1} \lambda_i \left( Z \right)  \right\},\\
\mathcal{M}_{2,\mu, \alpha}& := \left\{Z\in  \mmu:   -\lambda_n \left(Z\right) > \alpha  \sum_{i=1}^{n-1} \lambda_i \left( Z \right)   \right\}.
\end{align}
The two sets can be interpreted in the following way. If we would suppose that $\alpha=1 $ it would follow that $\trace \left(Z\right) < 0 $ for all matrices $Z\in \mualphatwo$. In particular, this implies that there is $x_0 \in \xmu$ such that $Z$ is in the descent cone of the function $\trace \left( \cdot \right)$ at the point $x_0 x^*_0 $. Hence, for $ \alpha < 1 $ we can interpret $\mualphatwo $ as a slightly enlarged union of descent cones. In order to bound $ \underset{Z \in \mualphatwo}{\inf} \Vert \mathcal{A} \left(Z\right) \Vert_{\ell_1}  / \Vert Z \Vert_1 $ from below we will rely on the following lemma, which is proven in Section \ref{section:smallballapplied}.
\begin{lemma}\label{lemma:smallballapplied}
	Assume that one of following two conditions is satisfied for $ \beta \in (0,1] $:
	\begin{enumerate}
		\item It holds that $ \vert \mathbb{E} \left[  \xi_1^2  \right] \vert^2 \le  1-\beta  $. In this case we set $\mu=1/81$.
		\item In addition to $ \big\vert \mathbb{E} \left[  \xi_1^2  \right] \big\vert^2 \le  1-\beta $, the inequality  $\mathbb{E} \left[ \vert \xi_1 \vert^4 \right] \ge 1+\beta $, is fulfilled. In this case we set $ \mu =1 $.
	\end{enumerate}
	Moreover, assume that
	\begin{equation}\label{ineq:condition2}
	m \ge C_1 \frac {K^{20}}{\beta^{5}} n.
	\end{equation}
	Then with probability at least $1- 2\exp \left( \frac{-m\beta^4}{C_2 K^{16}}   \right) $ it holds that
	\begin{equation}\label{ineq:key22}
	\underset{Z\in \mualphatwo \setminus \left\{ 0 \right\}}{\inf} \frac{\Vert \mathcal{A} \left(Z\right) \Vert_{\ell_1}}{ \Vert Z\Vert_{1}} \ge C_3 \frac{\beta^{5/2}}{K^8} m ,
	\end{equation}
	where $ \alpha = 4/5 $. Here $C_1$, $C_2$, and $C_3$ are universal constants.
\end{lemma}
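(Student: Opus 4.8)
The plan is to establish \eqref{ineq:key22} through Mendelson's small-ball method \cite{koltchinskii2015bounding,mendelson2014learning,tropp2015convex}. Write $a_i := \xi^{(i)}(\xi^{(i)})^*$ and $a := \xi\xi^*$, so that $\Vert\mathcal{A}(Z)\Vert_{\ell_1} = \sum_{i=1}^m \vert\langle a_i, Z\rangle_{HS}\vert = \sum_{i=1}^m \vert (\xi^{(i)})^* Z\,\xi^{(i)}\vert$; since $\mualphatwo$ is a cone we may restrict to $E := \mualphatwo \cap \left\{Z : \Vert Z\Vert_1 = 1\right\}$ and bound $\inf_{Z\in E}\Vert\mathcal{A}(Z)\Vert_{\ell_1}$. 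The small-ball method reduces this to (a) a lower bound on the marginal small-ball function $Q_{2\zeta}(E) := \inf_{Z\in E}\mathbb{P}\left(\vert\langle a, Z\rangle_{HS}\vert \ge 2\zeta\right)$ for a suitable threshold $\zeta>0$, and (b) an upper bound on the Rademacher complexity $W_m(E) := \mathbb{E}\,\sup_{Z\in E}\vert\sum_{i=1}^m\varepsilon_i\langle a_i, Z\rangle_{HS}\vert$: combining these, one obtains $\inf_{Z\in E}\Vert\mathcal{A}(Z)\Vert_{\ell_1}\ \ge\ c_1\zeta m\, Q_{2\zeta}(E) - c_2 W_m(E) - \zeta u\sqrt m$ with probability at least $1-e^{-u^2/2}$, for absolute constants $c_1,c_2>0$ and every $u>0$.

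Step (b) is routine. Since $\Vert Z\Vert_1 = 1$ on $E$ and the operator norm is dual to $\Vert\cdot\Vert_1$ on Hermitian matrices, $W_m(E)\le\mathbb{E}\,\Vert\sum_{i=1}^m\varepsilon_i\xi^{(i)}(\xi^{(i)})^*\Vert_{op}$. A symmetrization/comparison argument reduces this further to the deviation of the empirical covariance $\frac1m\sum_{i=1}^m\xi^{(i)}(\xi^{(i)})^*$ from the identity, which is $\lesssim K^2\left(\sqrt{n/m}+n/m\right)$ by the standard subgaussian estimate \cite{vershynin2016high}; hence $W_m(E)\lesssim K^2\left(\sqrt{mn}+n\right)\lesssim K^2\sqrt{mn}$ whenever $m\gtrsim n$. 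This is the step that dictates the linear-in-$n$ scaling of $m$.

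Step (a) is the heart of the argument. We apply the Paley--Zygmund inequality to the nonnegative random variable $\langle a, Z\rangle_{HS}^2 = (\xi^*Z\xi)^2$, which requires a lower bound on $\mathbb{E}[\langle a,Z\rangle_{HS}^2]$ and an upper bound on $\mathbb{E}[\langle a,Z\rangle_{HS}^4]$. For the latter, the Hanson--Wright inequality for subgaussian vectors gives $\Vert \xi^*Z\xi - \trace Z\Vert_{L_4}\lesssim K^2\Vert Z\Vert_{HS}$, and since $\vert\trace Z\vert\le\Vert Z\Vert_1$ and $\Vert Z\Vert_{HS}\le\Vert Z\Vert_1$ this yields $\mathbb{E}[\langle a,Z\rangle_{HS}^4]\lesssim K^8\Vert Z\Vert_1^4$. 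For the former, a direct second-moment computation using only $\mathbb{E}\xi_j=0$ and $\mathbb{E}\vert\xi_j\vert^2=1$ produces an exact identity all of whose coefficients are nonnegative (using $\mathbb{E}\vert\xi_1\vert^4\ge 1$); in particular
\[
\mathbb{E}\left[\langle \xi\xi^*, Z\rangle_{HS}^2\right] \ \ge\ \left(\trace Z\right)^2 + \left(\mathbb{E}\left[\vert\xi_1\vert^4\right]-1\right)\Vert\diagonal{Z}\Vert_{HS}^2 + \left(1-\left\vert\mathbb{E}\left[\xi_1^2\right]\right\vert^2\right)\left(\Vert Z\Vert_{HS}^2 - \Vert\diagonal{Z}\Vert_{HS}^2\right),
\]
where $\Vert Z\Vert_{HS}^2-\Vert\diagonal{Z}\Vert_{HS}^2 = \sum_{j\neq k}\vert Z_{jk}\vert^2\ge 0$ is the off-diagonal Hilbert--Schmidt mass. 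In Case~2, where $\mathbb{E}\vert\xi_1\vert^4\ge 1+\beta$, the right-hand side is at least $\beta\Vert Z\Vert_{HS}^2\ge\beta\lambda_n(Z)^2$; and by Lemma~\ref{lemma:onenegativeeigen} every $Z\in\mathcal{M}_{2,\mu,4/5}$ with $\Vert Z\Vert_1=1$ has $-\lambda_n(Z)>\tfrac{4/5}{1+4/5}=\tfrac49$ (irrespective of $\mu$), so $\mathbb{E}[\langle a,Z\rangle_{HS}^2]\gtrsim\beta$ — this is why Case~2 permits $\mu=1$. In Case~1 only $\vert\mathbb{E}[\xi_1^2]\vert^2\le1-\beta$ is available, the last coefficient is $\ge\beta$ but can be annihilated by a near-diagonal $Z$, and the peak-to-average constraint enters through the following geometric fact, which is where the structure of $\mmu$ is genuinely used: there is an absolute $\delta_0>0$ with $\Vert Z\Vert_{HS}^2-\Vert\diagonal{Z}\Vert_{HS}^2\ge\delta_0$ for all $Z\in\mathcal{M}_{2,1/81,4/5}$ with $\Vert Z\Vert_1=1$. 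Indeed, if the off-diagonal mass were at most $\delta$, then $\Vert Z-\diagonal{Z}\Vert_{op}\le\sqrt\delta$, so Weyl's inequality places some diagonal entry at $Z_{j_0j_0}\le\lambda_n(Z)+\sqrt\delta\le-\tfrac49+\sqrt\delta$; choosing $x_0\in\mathcal{X}_{1/81}$ and $t>0$ with $x_0x_0^*+tZ\in\posdef$ forces $\langle w,Zw\rangle\ge 0$ for every $w\perp x_0$, and evaluating this at the normalized projection of $e_{j_0}$ onto $x_0^{\perp}$, using $\vert x_{0,j_0}\vert\le\tfrac{1}{81}\Vert x_0\Vert$ and $\Vert Z\Vert_{op}\le\Vert Z\Vert_1=1$, gives $\langle w,Zw\rangle\le-\tfrac49+\sqrt\delta+\mathcal{O}(1/81)<0$, a contradiction for $\delta$ small. (The precise constant $1/81$ plays no role here — a larger $\mu$ would do — and is imposed by the separate restricted-isometry analysis of $\mathcal{M}_{1,\mu,\alpha}$.) Either way $\mathbb{E}[\langle a,Z\rangle_{HS}^2]\ge c\beta$ uniformly on $E$, and Paley--Zygmund then gives $Q_{2\zeta}(E)\gtrsim\beta^2/K^8$ with threshold $2\zeta\asymp\sqrt\beta$.

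Combining the pieces, the small-ball inequality with $u\asymp\sqrt m\,\beta^2/K^8$ yields, with probability at least $1-\exp(-\Omega(m\beta^4/K^{16}))$,
\[
\inf_{Z\in E}\Vert\mathcal{A}(Z)\Vert_{\ell_1}\ \gtrsim\ \zeta m\,Q_{2\zeta}(E)\ \gtrsim\ \frac{\beta^{5/2}}{K^8}m,
\]
where we chose $C_1$ in \eqref{ineq:condition2} large enough that $c_2 W_m(E)\lesssim K^2\sqrt{mn}$ and $\zeta u\sqrt m$ are each at most, say, a quarter of the main term — which is exactly the requirement $m\ge C_1 K^{20}\beta^{-5}n$. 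This is \eqref{ineq:key22}. The main obstacle is the Case~1 geometric fact above: turning the mild peak-to-average bound defining $\xmu$ into a quantitative lower bound on the off-diagonal mass of admissible directions, so as to prevent cancellation in the second moment; the remaining difficulty is bookkeeping the exponents of $K$ and $\beta$ through the Hanson--Wright and covariance estimates.
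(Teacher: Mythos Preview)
Your proposal is correct and follows the same overall architecture as the paper: Mendelson's small-ball method on (a normalization of) $\mualphatwo$, with the marginal tail probability $Q$ controlled via Paley--Zygmund (second moment computed exactly, fourth moment via Hanson--Wright), and the Rademacher complexity controlled through $\Vert Z\Vert_1$ and $\mathbb{E}\Vert\sum_i\varepsilon_i\xi^{(i)}(\xi^{(i)})^*\Vert$. The quantitative bookkeeping of $K$ and $\beta$ matches.

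The one substantive difference is in Case~1, where you must show that the off-diagonal Hilbert--Schmidt mass of $Z$ is uniformly bounded below. The paper does this via an orthogonal decomposition (its Lemma~\ref{lemma:key1}, part~2): writing $Z=Z_1+Z_2$ with $Z_1$ supported on the three-dimensional space $\mathrm{span}\{x_0x_0^*,ux_0^*,x_0u^*\}$ and $Z_2 x_0=0$, it shows $Z_2\succeq 0$, bounds $\Vert\diag(Z_1)\Vert_{HS}\le 3\mu\Vert Z\Vert_{HS}$ directly from $\Vert x_0\Vert_\infty\le\mu$, and bounds $\Vert\diag(Z_2)\Vert_{HS}\le\Vert Z_2\Vert_{HS}\le\sqrt{\Vert Z\Vert_{HS}^2-\lambda_n(Z)^2}$. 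Your route---Weyl perturbation to locate a very negative diagonal entry $Z_{j_0j_0}$, then test positivity of $x_0x_0^*+tZ$ against the projection of $e_{j_0}$ onto $x_0^\perp$---is a legitimate alternative and arguably more elementary; the paper's decomposition, on the other hand, produces a clean explicit inequality $\Vert\diag(Z)\Vert_{HS}\le(\sqrt{1-(1+\alpha^{-1})^{-2}}+3\mu)\Vert Z\Vert_{HS}$ rather than an existence-of-$\delta_0$ statement. A minor cosmetic difference is that you normalize $E$ by $\Vert Z\Vert_1=1$, whereas the paper normalizes by $\Vert Z\Vert_{HS}=1$ and converts at the end via $\Vert Z\Vert_1\le(1+\alpha^{-1})\Vert Z\Vert_{HS}$.

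One factual slip: your parenthetical that the value $\mu=1/81$ ``is imposed by the separate restricted-isometry analysis of $\mathcal{M}_{1,\mu,\alpha}$'' is backwards. Lemma~\ref{lemma:oneestimate} holds for every $\mu\in(0,1]$; its bound depends only on $\alpha$ and $\delta$. The smallness of $\mu$ is needed precisely \emph{here}, in the off-diagonal lower bound for Case~1---in the paper's version to make $\sqrt{1-(1+\alpha^{-1})^{-2}}+3\mu<1$, and in your version to make $-\tfrac49+\sqrt\delta+\mathcal{O}(\mu)<0$. This does not affect the validity of your argument, only the commentary.
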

The proof of Lemma \ref{lemma:smallballapplied} makes use of the fact that the set $ \mualphatwo $ has low complexity in the sense that the matrices in $ \mualphatwo $ are approximately low-rank.\\

In contrast, the set $ \mualphaone$ has rather high complexity. For example, note that $\mathcal{S}^n_{+} \subset \mualphaone $. Nevertheless, the quantity $	\underset{Z\in \mualphaone \setminus \left\{ 0 \right\}}{\inf} \Vert \mathcal{A} \left(Z\right) \Vert_{\ell_1} / \Vert Z\Vert_{1} $ can be bounded from below, because the measurement matrices $\xi^{\left(i\right)} (\xi^{\left(i\right)})^* $  are positive semidefinite and the matrices in $\mualphaone$ also have a dominant positive semidefinite component. This is achieved by the following lemma, whose proof can be found in Section \ref{section:oneestimate}.
\begin{lemma}\label{lemma:oneestimate}
	Let $0 < \mu \le 1$, $ \alpha >0$, and $ \delta>0$. Assume that
	\begin{equation}
	m \ge  \frac{C_1K^4n}{\delta^2}.
	\end{equation}
	Then with probability at least $1- \mathcal{O} \left(  \exp \left( -\frac{m}{C_2K^4}  \right) \right) $ for all $Z\in \mathcal{M}_{1,\mu,\alpha} $ it holds that
	\begin{equation}\label{ineq:intern4}
	\frac{1}{m} \Vert \mathcal{A} \left(Z\right) \Vert_{\ell_1} \ge \frac{1-\delta-\alpha-\alpha\delta}{1+ \alpha} \Vert Z \Vert_1.
	\end{equation}
	Here $C_1$ and $C_2$ are absolute constants.
\end{lemma}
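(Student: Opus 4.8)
\textbf{Proof plan for Lemma \ref{lemma:oneestimate}.}

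The plan is to exploit two structural facts about any $Z\in\mathcal{M}_{1,\mu,\alpha}$: first, by Lemma \ref{lemma:onenegativeeigen} it has at most one strictly negative eigenvalue $\lambda_n(Z)$, and second, by membership in $\mathcal{M}_{1,\mu,\alpha}$ that negative part is controlled, $-\lambda_n(Z)\le\alpha\sum_{i=1}^{n-1}\lambda_i(Z)$. Write $Z=Z_+ + Z_-$, where $Z_+$ collects the (nonnegative) eigenvalues $\lambda_1,\dots,\lambda_{n-1}$ and $Z_-=\lambda_n(Z)\,v_nv_n^*$ is the rank-one negative part. Then $Z_+\succeq 0$, $\|Z_-\|_1=-\lambda_n(Z)\le\alpha\|Z_+\|_1$, and $\|Z\|_1=\|Z_+\|_1+\|Z_-\|_1\le(1+\alpha)\|Z_+\|_1$. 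First I would use positivity of the measurement matrices: since $\xi^{(i)}(\xi^{(i)})^*\succeq 0$ and $Z_+\succeq 0$, every entry $\mathcal{A}(Z_+)(i)=\langle\xi^{(i)}(\xi^{(i)})^*,Z_+\rangle_{HS}\ge 0$, so $\|\mathcal{A}(Z_+)\|_{\ell_1}=\sum_i\mathcal{A}(Z_+)(i)=\langle\sum_i\xi^{(i)}(\xi^{(i)})^*,Z_+\rangle_{HS}=\trace\big(S\,Z_+\big)$, where $S:=\sum_{i=1}^m\xi^{(i)}(\xi^{(i)})^*$. A standard matrix Bernstein / subgaussian covariance concentration argument shows that with $m\gtrsim K^4n/\delta^2$ measurements one has $\|\frac1m S - \Id\|\le\delta$ with probability $1-\mathcal{O}(\exp(-m/(C_2K^4)))$; hence $\trace(SZ_+)\ge m(1-\delta)\|Z_+\|_1$ since $Z_+\succeq 0$ and $\trace(Z_+)=\|Z_+\|_1$.

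Next I would handle the negative part by the triangle inequality in $\ell_1$: $\|\mathcal{A}(Z)\|_{\ell_1}\ge\|\mathcal{A}(Z_+)\|_{\ell_1}-\|\mathcal{A}(Z_-)\|_{\ell_1}$. Since $Z_-=-|\lambda_n|v_nv_n^*$ with $|\lambda_n|=\|Z_-\|_1$, we get $\|\mathcal{A}(Z_-)\|_{\ell_1}=\|Z_-\|_1\sum_i|\langle\xi^{(i)},v_n\rangle|^2=\|Z_-\|_1\cdot v_n^*Sv_n\le\|Z_-\|_1\cdot m(1+\delta)$, again on the good event $\|\frac1m S-\Id\|\le\delta$. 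Combining,
\begin{equation}
\tfrac1m\|\mathcal{A}(Z)\|_{\ell_1}\ge(1-\delta)\|Z_+\|_1-(1+\delta)\|Z_-\|_1\ge\big((1-\delta)-\alpha(1+\delta)\big)\|Z_+\|_1.
\end{equation}
Finally divide by $\|Z\|_1\le(1+\alpha)\|Z_+\|_1$ to obtain $\tfrac1m\|\mathcal{A}(Z)\|_{\ell_1}\ge\frac{1-\delta-\alpha-\alpha\delta}{1+\alpha}\|Z\|_1$, which is exactly \eqref{ineq:intern4}; note $(1-\delta)-\alpha(1+\delta)=1-\delta-\alpha-\alpha\delta$. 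One should pay attention to the boundary case $\|Z_+\|_1=0$ (then $Z_-=0$ as well, since $-\lambda_n\le\alpha\|Z_+\|_1=0$, so $Z=0$ and the bound is vacuous), so the decomposition is well-defined for nonzero $Z\in\mathcal{M}_{1,\mu,\alpha}$.

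The only genuinely nontrivial ingredient is the concentration statement $\|\frac1m\sum_i\xi^{(i)}(\xi^{(i)})^*-\Id\|\le\delta$ with the stated sample complexity $m\gtrsim K^4n/\delta^2$ and exponential probability $1-\mathcal{O}(\exp(-m/(C_2K^4)))$; this is the expected main obstacle, though it is by now a classical fact for isotropic subgaussian vectors (see, e.g., the covariance estimation bounds in \cite{vershynin2016high}), following from an $\varepsilon$-net argument over the sphere combined with subexponential concentration of $\frac1m\sum_i|\langle\xi^{(i)},u\rangle|^2$ around $1$. Everything else is elementary linear algebra (eigendecomposition, trace identities) plus the positivity of the rank-one measurement matrices, and the role of $\mu$ and of the cone $\mathcal{M}_\mu$ enters only through Lemma \ref{lemma:onenegativeeigen} guaranteeing the at-most-one-negative-eigenvalue structure.
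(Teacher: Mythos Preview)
Your proposal is correct and follows essentially the same approach as the paper. Both arguments rest on the same concentration statement $\Vert \tfrac{1}{m}\sum_i \xi^{(i)}(\xi^{(i)})^* - \Id\Vert \le \delta$ (equivalently, the rank-one RIP $(1-\delta)\Vert z\Vert^2 \le \tfrac{1}{m}\Vert\mathcal{A}(zz^*)\Vert_{\ell_1}\le(1+\delta)\Vert z\Vert^2$), then split $Z$ into its positive part $Z_+$ and its single negative eigencomponent, apply the lower and upper RIP bounds respectively, and finish with $-\lambda_n\le\alpha\sum_{i<n}\lambda_i$ and $\Vert Z\Vert_1\le(1+\alpha)\sum_{i<n}\lambda_i$; the only cosmetic difference is that the paper drops the absolute value via $\vert a\vert\ge a$ and expands in the eigenbasis, whereas you phrase the same step as a triangle inequality $\Vert\mathcal{A}(Z)\Vert_{\ell_1}\ge\Vert\mathcal{A}(Z_+)\Vert_{\ell_1}-\Vert\mathcal{A}(Z_-)\Vert_{\ell_1}$.
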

We remark that Lemma \ref{lemma:oneestimate} would no longer hold if the measurement matrices $\xi^{\left(i\right)} (\xi^{\left(i\right)})^* $ would be replaced by  symmetric matrices with i.i.d. Gaussian entries (see \cite[Proposition 1]{slawski2015regularization}).\\

Having gathered all the necessary ingredients we can prove the main result  of this manuscript.
\begin{proof}[Proof of Theorem \ref{theorem:mainresult}]
	Set $\alpha =4/5$. The proof of the two statements is analogous, except that for the first statement we set $ \mu =1/81 $ whereas for the second statement we set $ \mu =1 $. By Lemma \ref{lemma:oneestimate} and assumption (\ref{ineq:condition44}) it follows that with probability at least $1-\mathcal{O} \left( \exp \left( -\frac{m}{CK^4} \right) \right) $
	\begin{equation}\label{ineq:intern48}
	\underset{Z\in \mualphaone \setminus \left\{ 0 \right\}	}{\inf} \frac{\Vert \mathcal{A} \left(Z\right) \Vert_{\ell_1}}{ \Vert Z \Vert_1} \gtrsim m.
	\end{equation}
	Furthermore, by Lemma \ref{lemma:smallballapplied} we have with probability at least $1- 2\exp \left( \frac{-m\beta^4}{C_2 K^{16}}   \right) $ that
	\begin{equation}\label{ineq:intern49}
	\underset{Z \in \mualphatwo \setminus \left\{ 0 \right\}}{\inf} \frac{\Vert \mathcal{A} \left(Z\right) \Vert_{\ell_1}}{ \Vert Z \Vert_{1}} \gtrsim \frac{\beta^{5/2}}{K^8}m
	\end{equation}
	holds.\\

	Set $Z:= \hat{X} - x_0 x^*_0$. Note that by definition we have that $Z$ is an admissible direction, i.e., $Z\in \mathcal{M}_{\mu}$. It follows by (\ref{ineq:intern48}), (\ref{ineq:intern49}), and $\mathcal{M}_{\mu}= \mualphaone \cup \mualphatwo $ that
	\begin{equation}
	\lambda_{\min} \left(  \mathcal{A}, \mathcal{M}_{\mu}  \right) \gtrsim \min \left\{ 1; \frac{\beta^{5/2}}{K^8}   \right\} m \ge  \frac{\beta^{5/2}}{K^8}  m,
	\end{equation}
	where in the last inequality we used (\ref{lemma:Kbound}) and $0<\beta  \le 1 $. It follows by Lemma \ref{lemma:chandrasekaranvariant} that
	\begin{equation}
	\Vert \hat{X} - x_0 x^*_0 \Vert_1 \le \frac{2 \Vert w \Vert_{\ell_1}}{ 	\lambda_{\min} \left( \mathcal{M}_{\mu}  \right)  } \lesssim  \frac{K^8}{m \beta^{5/2}}   \Vert w \Vert_{\ell_1},
	\end{equation}
	which finishes the proof.
\end{proof}

\subsection{Proof of Theorem \ref{theorem:mainresult2}}
The proof of Theorem \ref{theorem:mainresult2} is in large parts analogous to the proof of Theorem \ref{theorem:mainresult}. For this reason,  we will only highlight the main differences. Replacing $\xmu $ by $ \xmu \cap \mathbb{R}^n $ and $ \mmu $ by $\mmu \cap  \mathbb{R}^{n \times n} $ we can argue analogously to Section \ref{subsection:proof1} with the only difference that Lemma \ref{lemma:smallballapplied} has to be replaced by the following variant.
\begin{lemma}\label{lemma:smallballapplied_variant}
	Assume that one of following two conditions is satisfied for $ \mu, \beta \in (0,1] $:
	\begin{enumerate}
		\item It holds that $\mu  = \frac{1}{81} $ and $ \beta =1 $.
		\item It holds that $\mathbb{E} \left[ \vert \xi_1 \vert^4 \right] \ge 1+\beta $ and $\mu=1 $.
	\end{enumerate}
	Moreover, assume that
	\begin{equation}
	m \ge C_1 \frac {K^{20}}{\beta^{5}} n.
	\end{equation}
	Then with probability at least $1- 2\exp \left( \frac{-m\beta^4}{C_2 K^{16}}   \right) $ it holds that
	\begin{equation}
	\underset{Z\in (\mualphatwo \cap \mathbb{R}^{n \times n} ) \setminus \left\{ 0 \right\}}{\inf} \frac{\Vert \mathcal{A} \left(Z\right) \Vert_{\ell_1}}{ \Vert Z\Vert_{1}} \ge C_3 \frac{\beta^{5/2}}{K^8} m ,
	\end{equation}
	where $ \alpha = 4/5 $. Here $C_1$, $C_2$, and $C_3$ are universal constants.
\end{lemma}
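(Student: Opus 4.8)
The plan is to mimic the proof of Lemma \ref{lemma:smallballapplied} almost verbatim, the only conceptual change being that we now work over the real subspace $\mathbb{R}^{n\times n}\cap\mathcal{S}^n$ of symmetric matrices instead of over all of $\mathcal{S}^n$, and that we must exploit the restriction to real signals $x_0\in\xmu\cap\mathbb{R}^n$ to replace the assumption $\vert\E[\xi_1^2]\vert^2\le 1-\beta$ by the weaker requirement built into the two cases of the lemma. The heart of the argument will again be Mendelson's small-ball method applied to the map $Z\mapsto\|\mathcal{A}(Z)\|_{\ell_1}/\|Z\|_1$: one establishes a uniform small-ball (marginal) lower bound $\inf_{Z}\mathbb{P}(\vert\langle\xi\xi^*,Z\rangle\vert\ge c\|Z\|_1)\ge q$ over the relevant normalized set, and then controls the empirical width (a Rademacher/Gaussian complexity term) of that set, using that matrices in $\mualphatwo$ have all but one eigenvalue of one sign and $-\lambda_n(Z)\le\tfrac{1}{\alpha}\sum_{i<n}\lambda_i(Z)$, so $\|Z\|_1\le(1+\tfrac1\alpha)\sum_{i<n}\lambda_i(Z)$ and $Z$ is effectively low-rank after peeling off the top few eigenvalues; this is exactly where the "approximately low-rank" remark after Lemma \ref{lemma:smallballapplied} is used, and the complexity bound there transfers unchanged since restricting to a subspace only shrinks the set.

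The step that actually requires new input is the small-ball marginal bound, i.e.\ showing there are constants $c,q>0$ (depending polynomially on $\beta$ and $K$) such that $\mathbb{P}\big(\vert\langle\xi^{(i)}(\xi^{(i)})^*,Z\rangle\vert\ge c\,\|Z\|_1\big)\ge q$ for every nonzero $Z$ in the real version of $\mualphatwo$. Here I would use the real structure twice. First, since $Z=Z^\top\in\mathbb{R}^{n\times n}$ and the admissible direction condition forces $x_0x_0^\top+tZ\succeq0$ for some real $x_0$ with $\|x_0\|_\infty\le\mu\|x_0\|$, the quadratic form $\langle\xi\xi^*,Z\rangle=\xi^*Z\xi$ behaves, up to the rank-one positive correction along $x_0$, like a real quadratic form in the real and imaginary parts of $\xi$. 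Second, for real $Z$ one has $\xi^*Z\xi=\real(\xi)^\top Z\real(\xi)+\operatorname{Im}(\xi)^\top Z\operatorname{Im}(\xi)$ (the cross terms cancel by symmetry of $Z$ and the fact that $Z$ is real), which is a sum of two i.i.d.\ real quadratic forms; this is precisely the structural identity that lets one dispense with any constraint on $\E[\xi_1^2]$, because the obstruction in Remark \ref{remark1} (a signal colliding with its conjugate) is invisible once $Z$ is real. In Case~1 ($\mu=1/81$, $\beta=1$) the small-ball bound for such quadratic forms over the enlarged descent cone follows from a variance/Paley–Zygmund argument using $\E[\xi_1^2\bar\xi_1^2]=\E[\vert\xi_1\vert^4]\ge1$ and the peak-to-average constraint to keep the positive rank-one part from dominating; in Case~2 the extra hypothesis $\E[\vert\xi_1\vert^4]\ge1+\beta$ gives a quantitative spectral gap that yields the bound for all $x_0\in\mathbb{R}^n$, exactly as the fourth-moment hypothesis does in part (2) of Lemma \ref{lemma:smallballapplied}.

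Having the marginal bound and the complexity bound in hand, I would invoke the standard small-ball deviation inequality (as in \cite{mendelson2014learning,koltchinskii2015bounding,tropp2015convex}) to conclude that, with probability at least $1-2\exp(-m\beta^4/(C_2K^{16}))$, $\inf\|\mathcal{A}(Z)\|_{\ell_1}/\|Z\|_1\ge C_3\beta^{5/2}m/K^8$ uniformly over the real version of $\mualphatwo$, provided $m\gtrsim K^{20}\beta^{-5}n$ so that the complexity term is dominated; the exponents in $\beta$ and $K$ are inherited line-for-line from the proof of Lemma \ref{lemma:smallballapplied}. The main obstacle is thus purely the marginal small-ball estimate in Case~1: one has to show that a symmetric real quadratic form evaluated at a subgaussian vector, conditioned to lie in a slightly enlarged union of trace-descent cones anchored at low-peak-to-average signals, has a nondegenerate lower tail with only the hypothesis $\E[\vert\xi_1\vert^2]=1$ and subgaussianity — i.e.\ with no fourth-moment assumption at all. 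I expect this to go through via the decomposition into the two real quadratic forms above, together with a careful bookkeeping of how the rank-one positive part $x_0x_0^\top$ interacts with the peak constraint $\|x_0\|_\infty\le\|x_0\|/81$; everything else is a transcription of the already-established Lemma \ref{lemma:smallballapplied}.
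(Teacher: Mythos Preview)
Your overall framework is exactly the paper's: apply Mendelson's small-ball method to $\mathcal{Z}=\mualphatwo\cap\mathbb{R}^{n\times n}\cap\{\Vert Z\Vert_{HS}=1\}$, reuse the Rademacher complexity bound of Lemma~\ref{lemma:PRmeanwidth} (which only shrinks upon restriction to real matrices), and obtain the marginal bound via Paley--Zygmund together with a lower bound on $\E\vert\xi^*Z\xi\vert^2$. The place where your proposal and the paper diverge is the mechanism for that second-moment lower bound in Case~1, and here there is a genuine gap.

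You assert that $\xi^*Z\xi=\real(\xi)^\top Z\,\real(\xi)+\operatorname{Im}(\xi)^\top Z\,\operatorname{Im}(\xi)$ is ``a sum of two i.i.d.\ real quadratic forms'' and that this is what removes the dependence on $\E[\xi_1^2]$. The identity is correct, but the i.i.d.\ claim is not: nothing in Section~\ref{section:subgaussian} forces $\real(\xi_j)$ and $\operatorname{Im}(\xi_j)$ to be independent, and indeed their correlation and relative variances are encoded precisely by $\E[\xi_1^2]$. If you try to compute $\E[(\xi^*Z\xi)^2]$ from your decomposition without independence, the cross term $\E[(a^\top Za)(b^\top Zb)]$ reintroduces the very moments of $\E[\xi_1^2]$ you were hoping to avoid. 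So the proposed route does not, as stated, establish the marginal bound.

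The paper's fix is considerably simpler and does not pass through your decomposition at all. One just reads off from the explicit formula in Lemma~\ref{lemma:PRsecondmoments} that the only place $\vert\E[\xi_1^2]\vert^2$ enters is as the coefficient $(1-\vert\E[\xi_1^2]\vert^2)$ in front of $\sum_{i\ne j}\operatorname{Im}(Z_{i,j})^2$. For $Z\in\mathbb{R}^{n\times n}$ this sum vanishes identically, while the real off-diagonal term carries the coefficient $(1+\vert\E[\xi_1^2]\vert^2)\ge 1$. Hence $\E\vert\xi^*Z\xi\vert^2\ge\Vert Z-\diagonal{Z}\Vert_{HS}^2$ with no hypothesis on $\E[\xi_1^2]$ whatsoever, and then Lemma~\ref{lemma:key1}(2) with $\mu=1/81$, $\alpha=4/5$ gives $\Vert Z-\diagonal{Z}\Vert_{HS}\ge\tfrac{1}{100}\Vert Z\Vert_{HS}$, exactly as in the complex case. (Minor aside: your inequality $-\lambda_n\le\tfrac{1}{\alpha}\sum_{i<n}\lambda_i$ is backwards on $\mualphatwo$; the correct direction gives $\Vert Z\Vert_1\le(1+\tfrac{1}{\alpha})\vert\lambda_n\vert\le(1+\tfrac{1}{\alpha})\Vert Z\Vert_{HS}$, which is Lemma~\ref{lemma:key1}(1) and is what you actually need.)
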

In order to prove Lemma \ref{lemma:smallballapplied_variant} we can proceed similarly as in the proof of Lemma \ref{lemma:smallballapplied}, see Section \ref{section:smallballapplied}, where in the proof of Lemma \ref{lemma:smallballapplied} we have highlighted the necessary modifications.

\section{Proof of Lemma \ref{lemma:oneestimate}}\label{section:oneestimate}

\begin{proof}
Note that for any $z\in \mathbb{C}^n $ we have that $\Vert \mathcal{A} \left(zz^*\right) \Vert_{\ell_1} = \sum_{i=1}^{m} \vert \langle \xi_i , z \rangle \vert^2 $. Let $A\in \mathbb{C}^{m\times n}$ be the matrix whose rows are given by $ \left\{ \xi_i \right\}_{i=1}^m $. It follows that $ \Vert \mathcal{A} \left(zz^*\right) \Vert_{\ell_1}  = \Vert Az \Vert^2 $. It follows from  \cite[Theorem 4.6.1]{vershynin2016high} that due to our assumption on $m$ with probability at least $ 1- \mathcal{O} \left(  \exp \left( -\frac{m}{CK^4}  \right) \right) $ for all $z\in \mathbb{C}^m $ it holds that
\begin{equation}
\left(1-\delta\right) \Vert z \Vert^2 \le \frac{1}{m} \Vert Az \Vert^2 \le \left(1+\delta\right) \Vert z \Vert^2.
\end{equation}
Due to the observation above this is equivalent to 
\begin{equation}\label{equ:rank1rip}
\left(1-\delta\right) \Vert z \Vert^2 \le \frac{1}{m} \Vert \mathcal{A} \left(zz^*\right) \Vert_1 \le \left(1+\delta\right) \Vert z \Vert^2
\end{equation}	
for all $z \in \mathbb{C}^n $. We will assume in the following that (\ref{equ:rank1rip}) holds for all $z \in \mathbb{C}^m$.\\

Let $Z\in \mathcal{M}_{1,\mu, \alpha}$ with corresponding eigenvalue decomposition $ Z = \sum_{i=1}^{n} \lambda_i v_i v^*_i  $. We observe that
\begin{align}
 \frac{1}{m}  \Vert \mathcal{A} \left(Z\right) \Vert_{\ell_1} &=  \frac{1}{m}  \sum_{j=1}^{m} \vert  \left(\xi^{\left(j\right)}\right)^*  Z \xi^{\left(i\right)} \vert \\
& \ge  \frac{1}{m}  \sum_{j=1}^{m} \left(\xi^{\left(j\right)}\right)^* Z \xi^{\left(j\right)}\\
& =  \frac{1}{m}   \sum_{j=1}^{m} \left(\xi^{\left(j\right)}\right)^*\left( \sum_{i=1}^{n} \lambda_i v_i v^*_i   \right) \xi^{\left(j\right)}\\
&= \frac{1}{m} \sum_{i=1}^{n} \lambda_i \sum_{j=1}^{m} \left(\xi^{\left(j\right)}\right)^* v_i v^*_i \xi^{\left(j\right)}\\
&= \frac{1}{m} \sum_{i=1}^{n} \lambda_i  \Vert  \mathcal{A} \left(v_i v^*_i\right) \Vert_{\ell_1}.
\end{align}
By Lemma \ref{lemma:onenegativeeigen} we know that $Z$ has at most one negative eigenvalue. If all eigenvalues $\lambda_i \left(Z\right) $ are positive, this inequality chain and inequality (\ref{equ:rank1rip}) imply that
\begin{equation}
\frac{1}{m} \Vert \mathcal{A} \left(Z\right) \Vert_{\ell_1} \ge \left(1-\delta\right) \sum_{i=1}^{n} \lambda_i = \left(1-\delta\right) \Vert Z \Vert_1,
\end{equation}
which shows (\ref{ineq:intern4}). Now suppose that $\lambda_n \left(Z\right) <0$. By (\ref{equ:rank1rip}) and $ -\lambda_n \left(Z\right) \le \alpha \sum_{i=1}^{n-1} \lambda_i \left(Z\right) $, which is due to $Z\in \mathcal{M}_{1,\mu,\alpha} $, we obtain that
\begin{equation}\label{ineq:intern2}
\begin{split}
 \frac{1}{m}  \Vert \mathcal{A} \left(Z\right) \Vert_1 &\ge  \left( 1-\delta \right) \sum_{i=1}^{n} \lambda_j  + \left(1+\delta \right) \lambda_n\\
 &\ge  \left(1-\delta - \alpha \left(1+\delta\right)\right)  \sum_{i=1}^{n-1} \lambda_i.
\end{split}
\end{equation}
Again using the relation $ -\lambda_n \left(Z\right) \le \alpha \sum_{i=1}^{n-1} \lambda_i \left(Z\right) $ we can also observe that
\begin{equation}\label{ineq:intern3}
\Vert Z \Vert_1 =  \sum_{j=1}^{n-1} \lambda_j  -\lambda_n \le  \left(1+\alpha \right) \sum_{i=1}^{n-1} \lambda_j.
\end{equation}
Combining (\ref{ineq:intern2}) and (\ref{ineq:intern3}) shows (\ref{ineq:intern4}), which finishes the proof.
\end{proof}

\section{Proof of Lemma \ref{lemma:smallballapplied} and Lemma \ref{lemma:smallballapplied_variant}}\label{section:smallballapplied}

In order to prove Lemma \ref{lemma:smallballapplied} and Lemma \ref{lemma:smallballapplied_variant} we will use the following version of Mendelson's small ball method \cite{koltchinskii2015bounding,mendelson2014learning}, a tool for deriving a lower bound for nonnegative empirical process.

\begin{lemma}\cite[Lemma 1]{dirksen2018gap}\label{thm:smallballmethod}
	Let $\mathcal{Z} \subset \mathcal{S}^n $ and let $\xi^{(1)}, \xi^{(2)}, \ldots, \xi^{(m)} $ be i.i.d. random vectors. Let $u>0$ and $t>0$ and define
	\begin{equation}\label{definition:Q}
	Q_{\mathcal{Z}} \left( u \right) := \underset{Z \in \mathcal{Z}}{\inf} \mathbb{P} \left( \vert \innerproduct{  \xi^{\left(1\right)} \left(\xi^{\left(1\right)}\right)^*, Z } \vert   \ge u \right).
	\end{equation}
	Then, with probability at least $1-2\exp \left( -2t^2 \right) $, it holds that
	\begin{align}
	&\underset{Z \in \mathcal{Z}}{\inf}   \left( \frac{1}{m} \sum_{i=1}^{m} \big\vert \innerproduct{ \xi^{\left(i\right)} \left(\xi^{\left(i\right)}\right)^*, Z} \big\vert \right)\\
	 \ge & u \left(  Q_{\mathcal{Z}} \left( 2u \right) - \frac{4}{u} \mathbb{E} \left[ \underset{Z \in \mathcal{Z}}{\sup} \Big\vert  \frac{1}{m} \sum_{i=1}^{m} \varepsilon_i \innerproduct{ \xi^{\left(i\right)} \left(\xi^{\left(i\right)}\right)^* , Z }   \Big\vert \right]       -\frac{t}{\sqrt{m}}   \right),
	\end{align}
	where $ \left( \varepsilon_{i} \right)^m_{i=1} $ are independent, symmetric, $\left\{ -1,1 \right\} $-valued random variables that are independent of $ \left(X_i \right)^m_{i=1} $.
\end{lemma}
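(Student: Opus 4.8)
This is an instance of Mendelson's small-ball method, so the plan is to run the by-now-standard argument: replace each summand by a truncated surrogate, lower-bound its expectation by the small-ball probability $Q_{\mathcal{Z}}$, control the uniform deviation of the empirical average from its mean by symmetrization and contraction, and finish with a bounded-differences concentration inequality. Throughout write $h_i(Z) := \innerproduct{\xi^{(i)}(\xi^{(i)})^*, Z}$, which is real because $\xi^{(i)}(\xi^{(i)})^*$ and $Z$ are Hermitian, and $f_Z(\xi) := \vert \innerproduct{\xi\xi^*, Z} \vert$.

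First I would fix a nondecreasing, $(1/u)$-Lipschitz function $\psi \colon \R_{\ge 0} \to [0,1]$ with $\psi(s)=0$ for $s \le u$ and $\psi(s)=1$ for $s \ge 2u$, for instance $\psi(s) = \min\{1, (s/u-1)_+\}$. Since $\psi(s) \le s/u$ for all $s \ge 0$, one has $s \ge u\,\psi(s)$ pointwise, hence
\begin{equation}
\frac{1}{m} \sum_{i=1}^m f_Z(\xi^{(i)}) \;\ge\; \frac{u}{m} \sum_{i=1}^m \psi\big(f_Z(\xi^{(i)})\big) \qquad \text{for every } Z .
\end{equation}
On the other hand $\psi(s) \ge \Ind[s \ge 2u]$, so $\E\big[\psi(f_Z(\xi^{(1)}))\big] \ge \mathbb{P}\big(f_Z(\xi^{(1)}) \ge 2u\big) \ge Q_{\mathcal{Z}}(2u)$ for every $Z \in \mathcal{Z}$. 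Adding and subtracting the expectation, it then suffices to bound from above
\begin{equation}
D \;:=\; \sup_{Z \in \mathcal{Z}} \left( \E\big[\psi(f_Z(\xi^{(1)}))\big] - \frac{1}{m} \sum_{i=1}^m \psi\big(f_Z(\xi^{(i)})\big) \right) ,
\end{equation}
since combining the last two displays yields $\tfrac{1}{m}\sum_i f_Z(\xi^{(i)}) \ge u\big(Q_{\mathcal{Z}}(2u) - D\big)$ uniformly over $Z \in \mathcal{Z}$.

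The quantity $D$ depends on the independent vectors $\xi^{(1)}, \dots, \xi^{(m)}$ and changes by at most $1/m$ if one of them is altered, because $\psi$ is $[0,1]$-valued; the bounded-differences (McDiarmid) inequality therefore gives $D \le \E D + t/\sqrt{m}$ on an event of probability at least $1 - 2\exp(-2t^2)$. To control $\E D$, I would symmetrize at the cost of a factor $2$ and then, conditionally on the $\xi^{(i)}$, apply the Ledoux--Talagrand contraction principle: the map $r \mapsto \psi(\vert r \vert)$ from $\R$ to $[0,1]$ is $(1/u)$-Lipschitz and vanishes at $0$, and $\psi(f_Z(\xi^{(i)})) = \psi(\vert h_i(Z) \vert)$ with $h_i$ real-valued, so
\begin{equation}
\E D \;\le\; 2\, \E \sup_{Z \in \mathcal{Z}} \Big| \frac{1}{m} \sum_{i=1}^m \varepsilon_i\, \psi(\vert h_i(Z) \vert) \Big| \;\le\; \frac{4}{u}\, \E \sup_{Z \in \mathcal{Z}} \Big| \frac{1}{m} \sum_{i=1}^m \varepsilon_i \innerproduct{\xi^{(i)}(\xi^{(i)})^*, Z} \Big| .
\end{equation}
Substituting this estimate of $D$ into the previous step and taking the infimum over $Z \in \mathcal{Z}$ produces exactly the claimed inequality.

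The argument is largely routine; the one place that needs care is the contraction step. One must check that $r \mapsto \psi(\vert r \vert)$ genuinely is $(1/u)$-Lipschitz with $\psi(\vert 0 \vert)=0$, so that the absolute-value form of the contraction principle applies, and one must use that $Z \mapsto \innerproduct{\xi^{(i)}(\xi^{(i)})^*, Z}$ is a real scalar --- this is what makes the one-dimensional contraction principle available even though the matrices involved are complex Hermitian. Keeping track of the two factors of $2$, one from symmetrization and one from the absolute-value contraction inequality, is what yields the constant $4/u$ in front of the Rademacher complexity term.
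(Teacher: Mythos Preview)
The paper does not give its own proof of this lemma; it is stated with a citation to \cite[Lemma 1]{dirksen2018gap} and then used as a black box in Section~\ref{section:smallballapplied}. Your proposal reproduces the standard Mendelson small-ball argument (soft-indicator truncation, bounded differences for the supremum, symmetrization, and Ledoux--Talagrand contraction) and is correct; in particular your bookkeeping of the two factors of $2$ leading to the $4/u$ constant, and your observation that $Z \mapsto \innerproduct{\xi\xi^*,Z}$ is real-valued so that one-dimensional contraction applies, are exactly the points one needs to watch. There is nothing to compare against in the paper itself.
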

Our goal is to apply Lemma \ref{thm:smallballmethod} to $\mathcal{Z}= \mualphatwo \cap \left\{Z \in \mathcal{S}^n: \ \Vert Z \Vert_F =1 \right\} $. The following key lemma  shows that matrices in $ \mualphatwo $ have two favorable properties: They are approximately low-rank and their mass with respect to the Frobenius norm is not concentrated on the diagonal for $\mu$ is small. The first property follows directly from the fact that the negative eigenvalue is rather small, the second property requires the spectral flatness of $x_0$, i.e., that $\mu$ is bounded.
\begin{lemma}\label{lemma:key1}
Let $ \alpha >0 $ and $ 0 < \mu \le 1 $. Assume that $Z \in \mathcal{M}_{2,\mu, \alpha}$. Then it holds that
\begin{enumerate}
\item 
\begin{equation}\label{ineq:lowrank}
\Vert Z \Vert_1 \le \left(1+  \frac{1}{\alpha} \right) \Vert Z \Vert_{\text{HS}},
\end{equation}
\item 
\begin{align}
\Vert \diagonal{Z} \Vert_{HS}  \le \left(  \sqrt{1 - \frac{1}{ \left( 1 + \alpha^{-1} \right)^2 }}  + 3\mu   \right) \Vert Z \Vert_{HS}.
\end{align}
\end{enumerate}
\end{lemma}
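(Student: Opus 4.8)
The plan is to prove the two assertions in turn, using only the defining inequality $-\lambda_n(Z) > \alpha \sum_{i=1}^{n-1}\lambda_i(Z)$ of $\mualphatwo$ together with Lemma \ref{lemma:onenegativeeigen} (so all other eigenvalues are nonnegative), and for part (2) the admissibility of $Z$, i.e. the existence of $x_0 \in \xmu$ and $t>0$ with $x_0x_0^* + tZ \in \posdef$.

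For part (1), write $s := \sum_{i=1}^{n-1}\lambda_i \ge 0$ and $r := -\lambda_n > 0$, so that $\Vert Z\Vert_1 = s + r$ and $\Vert Z\Vert_{HS}^2 = \sum_{i=1}^{n-1}\lambda_i^2 + r^2 \ge r^2$, hence $\Vert Z\Vert_{HS} \ge r$. The membership condition gives $r > \alpha s$, i.e. $s < r/\alpha$. Therefore $\Vert Z\Vert_1 = s + r < r/\alpha + r = (1 + \alpha^{-1}) r \le (1+\alpha^{-1})\Vert Z\Vert_{HS}$. (At the boundary $r = \alpha s$ one still gets the non-strict inequality, which is all that is claimed.) This is the easy half and it is essentially a restatement of ``approximately rank one.''

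For part (2), I would argue that the eigenvector $v_n$ associated with the unique negative eigenvalue captures most of the Frobenius mass. Indeed from part (1)'s estimates, $r^2 / \Vert Z\Vert_{HS}^2 \ge 1 - (\text{mass off the } v_n \text{ direction})/\Vert Z\Vert_{HS}^2$; more precisely, since $\sum_{i=1}^{n-1}\lambda_i^2 \le (\sum_{i=1}^{n-1}\lambda_i)^2 = s^2 < (r/\alpha)^2$, we get $\Vert Z\Vert_{HS}^2 = \sum_{i=1}^{n-1}\lambda_i^2 + r^2 < r^2(1+\alpha^{-2})$... wait, that goes the wrong way; instead I want a lower bound on $r$, which I already have: $r > \alpha s$ and $\Vert Z - \lambda_n v_n v_n^*\Vert_{HS}^2 = \sum_{i=1}^{n-1}\lambda_i^2 \le s^2 < r^2/\alpha^2$. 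Combined with $\Vert Z\Vert_{HS}^2 = r^2 + \sum \lambda_i^2$, this yields $\Vert Z - \lambda_n v_n v_n^*\Vert_{HS}^2 \le \Vert Z\Vert_{HS}^2 - r^2$ and $r^2 \ge \Vert Z\Vert_{HS}^2 \alpha^2/(1+\alpha^2)$, equivalently $\Vert Z - \lambda_n v_n v_n^*\Vert_{HS} \le \sqrt{1 - \tfrac{1}{(1+\alpha^{-1})^2}}\,\Vert Z\Vert_{HS}$ after rearranging. Then by the triangle inequality for $\Vert \diag(\cdot)\Vert_{HS}$ (a contraction in $\Vert\cdot\Vert_{HS}$) we reduce to bounding $\Vert \diag(v_n v_n^*)\Vert_{HS}$. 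Now orthogonality $\langle u, x_0\rangle = 0$ for suitable $u$ is not available here; instead I would use admissibility directly: $u^*(x_0x_0^* + tZ)u \ge 0$ for all $u$ forces, taking $u = v_n$, that $|\langle v_n, x_0\rangle|^2 \ge t\,|\lambda_n|$, so $v_n$ cannot be too ``spread out'' relative to $x_0$. The point $x_0$ satisfies $\Vert x_0\Vert_\infty \le \mu\Vert x_0\Vert$; a short computation bounding $\Vert \diag(v_n v_n^*)\Vert_{HS} = (\sum_j |v_{n,j}|^4)^{1/2} \le \Vert v_n\Vert_\infty$ and relating $\Vert v_n\Vert_\infty$ to $\mu$ via the constraint should give $\Vert \diag(v_n v_n^*)\Vert_{HS} \le 3\mu$ (the constant $3$ absorbing the normalization and the slack from passing $u$ through a span). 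Assembling, $\Vert \diag(Z)\Vert_{HS} \le \Vert \diag(Z - \lambda_n v_n v_n^*)\Vert_{HS} + |\lambda_n|\,\Vert \diag(v_n v_n^*)\Vert_{HS} \le \sqrt{1 - (1+\alpha^{-1})^{-2}}\,\Vert Z\Vert_{HS} + 3\mu\Vert Z\Vert_{HS}$, which is the claim.

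The main obstacle I anticipate is the second bound on $\Vert \diag(v_n v_n^*)\Vert_{HS}$: one has to convert the positive-semidefiniteness constraint $x_0 x_0^* + tZ \succeq 0$ into genuine control of $\Vert v_n\Vert_\infty$, and the naive estimate only controls $v_n$ through its overlap with the single vector $x_0$, not coordinate-by-coordinate. The likely fix is to note that any vector with small overlap with $x_0$ in the two-dimensional span $\mathrm{span}\{v_n, x_0\}$ still lies close to $\mathrm{span}\{v_n\}$, so one works instead with a slightly rotated $\tilde v$ that is genuinely orthogonal to $x_0$, pays a factor in the rank-one approximation, and then $\Vert \tilde v\Vert_\infty \le \Vert \tilde v - (\text{component along }x_0)\Vert_\infty + \Vert x_0\Vert_\infty \cdot(\ldots) \lesssim \mu$; tracking these perturbations carefully is what produces the constant $3$ and is the step most prone to slippage. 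Everything else is elementary spectral bookkeeping.
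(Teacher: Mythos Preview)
Your argument for part (1) is correct and matches the paper's proof.

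For part (2), however, your decomposition along the negative eigenvector $v_n$ does not work, and the gap is precisely where you suspected. The intermediate claim $|\lambda_n|\,\Vert \diag(v_n v_n^*)\Vert_{HS} \le 3\mu \Vert Z\Vert_{HS}$ is false in general: the admissibility constraint $x_0 x_0^* + tZ \succeq 0$ only forces $v_n$ to have \emph{some} overlap with $x_0$ (namely $|\langle v_n, x_0\rangle|^2 \ge t|\lambda_n|$ for that particular $t$), and gives no coordinatewise control over $v_n$. For a concrete counterexample take $x_0 = n^{-1/2}(1,\ldots,1)$, so $\mu = n^{-1/2}$, let $u = (e_1-e_2)/\sqrt{2} \perp x_0$, and set $Z = -x_0 x_0^* + \gamma(u x_0^* + x_0 u^*) + \delta u u^*$ with $\gamma$ large and $\delta \in (0,1)$. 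One checks directly that $x_0 x_0^* + tZ \succeq 0$ for small $t>0$ and that $Z \in \mualphatwo$ for any $\alpha<1$; yet for large $\gamma$ the negative eigenvector satisfies $v_n \approx (x_0 - u)/\sqrt{2}$, so $\Vert \diag(v_n v_n^*)\Vert_{HS} \approx 1/\sqrt{8}$ is bounded away from zero while $3\mu \to 0$ as $n\to\infty$. Your proposed fix of rotating to some $\tilde v \perp x_0$ cannot help either, since a vector orthogonal to $x_0$ (such as $u$ above) can have $\Vert \tilde v\Vert_\infty$ of order one regardless of $\mu$.

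The paper avoids this by decomposing along $x_0$ rather than along $v_n$: write $Z = Z_1 + Z_2$ with $Z_1 = -\lambda x_0 x_0^* + u x_0^* + x_0 u^*$ (where $u\perp x_0$) and $Z_2 x_0 = 0$. Then $\Vert \diag(Z_1)\Vert_{HS} \le 3\mu\Vert Z\Vert_{HS}$ follows immediately from $\Vert x_0\Vert_\infty \le \mu$, because every entry of $Z_1$ carries a factor $(x_0)_i$. For the other piece one uses admissibility to show $Z_2 \succeq 0$, whence $\lambda_n(Z) \ge -\Vert Z_1\Vert_{HS}$; combined with the bound $|\lambda_n| \ge (1+\alpha^{-1})^{-1}\Vert Z\Vert_1$ from part (1) this gives $\Vert Z_2\Vert_{HS} = \sqrt{\Vert Z\Vert_{HS}^2 - \Vert Z_1\Vert_{HS}^2} \le \sqrt{1-(1+\alpha^{-1})^{-2}}\,\Vert Z\Vert_{HS}$. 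So the roles of the two pieces are reversed relative to your attempt: the $\mu$-controlled piece is the one aligned with $x_0$, and the $\sqrt{1-(1+\alpha^{-1})^{-2}}$-controlled piece is the orthogonal PSD remainder.
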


\begin{proof}
Let $Z \in  \mualphatwo$. By definition of $\mualphatwo $ we have that $ \alpha \sum_{i=1}^{n-1} \lambda_i \left(Z\right) < - \lambda_n \left(Z\right)  $, which implies that
\begin{equation}\label{ineq:interna2}
\Vert Z \Vert_1 = \sum_{i=1}^{n-1} \lambda_i \left(Z\right)   -\lambda_n \left(Z\right) \le - \left( 1 + \frac{1}{\alpha}   \right) \lambda_n \left(Z\right) \le    \left( 1 + \frac{1}{\alpha}    \right)  \Vert Z \Vert_{\text{HS}}.
\end{equation}
This proves inequality (\ref{ineq:lowrank}).\\

\noindent In order to prove the second inequality note that by definiton of $\mathcal{M}_{2,\mu,\alpha} \subset \mathcal{M}_{\mu} $ we can choose $x_0 \in \mathcal{X}_{\mu}  \cap S^{n-1} $ such that there exists $t>0$ with $x_0x^*_0 + tZ$ positive semidefinite. For this choice of $x_0$ we can decompose $Z$ uniquely into
\begin{equation}
Z= \underset{=: Z_1}{\underbrace{ -\lambda x_0x^*_0 + ux^*_0+ x_0u^*} }+ Z_2,
\end{equation}
where $ \lambda \in \mathbb{R} $, $ \langle u,x_0 \rangle =0$, and $Z_2x_0=0$. We observe that
\begin{align}\label{ineq:interna5}
\Vert \diagonal{Z} \Vert_{HS} &\le  \Vert \diagonal{Z_1 } \Vert_{HS}  + \Vert \diagonal{Z_2}\Vert_{HS}.
\end{align}
We will bound the two summands separately. We begin with $ \Vert \diagonal{Z_1} \Vert_{HS} $ and observe that
\begin{equation}\label{ineq:interna6}
\begin{split}
\Vert \diagonal{Z_1} \Vert_{\text{HS}} &\le \vert \lambda \vert \Vert \diag \left(x_0 x^*_0\right) \Vert_{HS} + 2 \Vert \diag \left(ux^*_0\right) \Vert_{\text{HS}}\\
&= \vert \lambda \vert \sqrt{\sum_{i=1}^{n} \big\vert \left(x_0\right)_i \big\vert^4  } + 2 \sqrt{\sum_{i=1}^{n} \big\vert \left(x_{0}\right)_i \big\vert^2 \big\vert u_i \big\vert^2  }\\
&\le \mu \vert \lambda \vert + 2\mu \Vert u \Vert \\
&\le 3\mu \Vert Z_1 \Vert_{\text{HS}}\\
&\le 3\mu \Vert Z \Vert_{\text{HS}}.
\end{split}
\end{equation}
In the first inequality we used the triangle inequality and in the third line we used that $\Vert x_0 \Vert_{\infty} \le \mu \Vert x_0 \Vert= \mu  $ due to $x_0 \in \mathcal{X}_{\mu}\cap S^{n-1}$. In the fourth line we used that $\vert \lambda \vert \le \Vert Z_1 \Vert_{HS} $ and $ \Vert u \Vert \le \Vert Z_1 \Vert_{HS} $, which follows from the fact that the summands of $Z_1= -\lambda x_0 x^*_0 + ux^*_0 + x_0 u^* $ are orthogonal to each other. In the last line we again used that $\Vert Z_1 \Vert_{HS} \le \Vert Z \Vert_{HS} $ as $Z$ is decomposed orthogonally into $Z=Z_1+Z_2$. \\

In order to bound $ \Vert \diagonal{Z_2}  \Vert_{HS}$ we note first  that $Z_2$ is positive semidefinite. Indeed, suppose by contradiction that $Z_2$ is not positive semidefinite. Then there would exist a vector $v\in \mathbb{C}^n$ such that $ \langle v, x_0 \rangle =0$ and $ v^* Z_2 v <0$. In particular, this would imply that $v^* \left( x_0 x^*_0 + t Z \right)v <0 $ for all $ t >0$, which is a contradiction to our choice of $x_0$.\\
Now let $w\in \mathbb{C}^n$ be the normalized (i.e., $ \Vert w \Vert=1 $) eigenvector corresponding to the eigenvalue $ \lambda_n \left(Z\right)$. Then we obtain that
\begin{align}\label{ineq:interna3}
\lambda_n \left( Z \right) = w^* Z w = w^* Z_1 w + w^*Z_2w \ge w^* Z_1 w \ge  - \Vert Z_1 \Vert  \ge - \Vert Z_1 \Vert_{HS} ,
\end{align}
where the first inequality follows from the fact that $Z_2$ is positive semidefinite. Using this observation we obtain that
\begin{equation}
\begin{split}
\Vert \diagonal{Z_2}\Vert_{\text{HS}} &\le \Vert Z_2 \Vert_{\text{HS}}\\
&= \sqrt{ \Vert Z \Vert^2_{\text{HS}} - \Vert Z_1 \Vert_{\text{HS}}^2 }\\
& \overset{(\ref{ineq:interna3})}{\le}  \sqrt{ \Vert Z \Vert^2_{\text{HS}} - \lambda^2_n \left(Z\right) } \\
&\le \sqrt{ \Vert Z \Vert^2_{\text{HS}} -   \frac{1}{ \left( 1 + \alpha^{-1}  \right)^2} \Vert Z \Vert_{1}  }\\
& \le \Vert Z \Vert_{\text{HS}} \sqrt{1 - \frac{1}{ \left( 1 + \alpha^{-1}  \right)^2 }},
\end{split}
\end{equation}
where in the fourth line we used that $ - \lambda_n \left(Z\right) \ge      \frac{1}{1+\alpha^{-1}}  \Vert Z \Vert_{1} $, which is a consequence of the first inequality of (\ref{ineq:interna2}). Combining this estimate with (\ref{ineq:interna5}) and (\ref{ineq:interna6}) shows part (2), which finishes the proof.

\end{proof}
In analogy to \cite{koltchinskii2015bounding} we bound $Q_{\mathcal{Z}} \left(2u\right) $ using the following lemma, whose proof is based on the Paley-Zygmund inequality. A key difference is that we use the Hanson-Wright inequality to control the fourth moment $\mathbb{E}  \vert \xi^* A \xi \vert^4  $ appropriately.
\begin{lemma}\label{lemma:PRsmallballprobabilities}
	Let $A \in  \mathcal{S}^n$ and let $ \xi = \left( \xi_1, \ldots, \xi_n \right) $ be a random vector with independent and identically distributed entries $ \xi_i  $ taking values in $ \mathbb{C} $ such that $ \mathbb{E} \xi_i = 0 $, $ \mathbb{E} \vert \xi_i \vert^2 =1 $, and $\Vert \xi_i  \Vert_{\psi_2} \le K $. Then we have that
	\begin{equation}\label{equ:smallballphaseretrieval}
	\mathbb{P} \left(  \vert  \xi^* A \xi \vert^2 \ge \frac{\E \vert \xi^* A \xi \vert^2}{2}  \right) \ge  \frac{ \left( \E \vert \xi^* A \xi \vert^2 \right)^2 }{ C \left( K^8 \Vert A \Vert^4_{HS} +  \left( \trace\left(A\right) \right)^4    \right) }.
	\end{equation}
	Here $C>0 $ is an absolute constant.
\end{lemma}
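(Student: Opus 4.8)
Proof.
The plan is to apply the Paley–Zygmund inequality to the nonnegative random variable $Y := \vert \xi^* A \xi\vert^2$. Recall that Paley–Zygmund states that for $\theta \in (0,1)$,
\begin{equation}\label{ineq:paleyzygmund}
\mathbb{P}\left( Y \ge \theta\, \mathbb{E} Y \right) \ge (1-\theta)^2 \frac{(\mathbb{E} Y)^2}{\mathbb{E} Y^2}.
\end{equation}
Taking $\theta = 1/2$ gives $\mathbb{P}(Y \ge \tfrac12 \mathbb{E} Y) \ge \tfrac14 (\mathbb{E} Y)^2 / \mathbb{E} Y^2$. Since $\mathbb{E} Y = \mathbb{E}\vert \xi^* A \xi\vert^2$ and $\mathbb{E} Y^2 = \mathbb{E}\vert \xi^* A \xi\vert^4$, the lemma will follow once I establish the fourth-moment bound
\begin{equation}\label{ineq:fourthmomenttarget}
\mathbb{E}\, \vert \xi^* A \xi \vert^4 \lesssim K^8 \Vert A \Vert_{HS}^4 + \left( \trace(A) \right)^4,
\end{equation}
because then $\tfrac14(\mathbb{E} Y)^2/\mathbb{E} Y^2 \gtrsim (\mathbb{E}\vert\xi^* A\xi\vert^2)^2 / (K^8 \Vert A\Vert_{HS}^4 + (\trace A)^4)$, which is exactly \eqref{equ:smallballphaseretrieval}.

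The main work is thus \eqref{ineq:fourthmomenttarget}. First I would center: write $\xi^* A \xi = \trace(A) + \left( \xi^* A \xi - \mathbb{E}[\xi^* A \xi]\right)$, using $\mathbb{E}[\xi^* A\xi] = \trace(A)$ since $\mathbb{E}\xi_i = 0$ and $\mathbb{E}\vert\xi_i\vert^2 = 1$. By the triangle inequality in $L_4$ (or the elementary bound $(a+b)^4 \le 8a^4 + 8b^4$), it suffices to bound $\mathbb{E}\vert \xi^* A \xi - \mathbb{E}[\xi^* A\xi]\vert^4$ by a constant times $K^8 \Vert A\Vert_{HS}^4$. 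This is precisely a fourth-moment version of the Hanson–Wright inequality for subgaussian quadratic forms: since $\xi$ has i.i.d. mean-zero, unit-variance, subgaussian-$K$ entries, one has the tail bound
\begin{equation}\label{ineq:hansonwright}
\mathbb{P}\left( \big\vert \xi^* A \xi - \mathbb{E}[\xi^* A\xi] \big\vert \ge s \right) \le 2\exp\left( -c \min\left\{ \frac{s^2}{K^4 \Vert A\Vert_{HS}^2}, \frac{s}{K^2 \Vert A\Vert} \right\} \right),
\end{equation}
and integrating $\mathbb{E}\vert\cdot\vert^4 = \int_0^\infty 4s^3 \mathbb{P}(\vert\cdot\vert \ge s)\,ds$ against this tail yields $\mathbb{E}\vert \xi^* A\xi - \mathbb{E}[\xi^* A\xi]\vert^4 \lesssim K^8 \Vert A\Vert_{HS}^4 + K^8 \Vert A\Vert^4$; then $\Vert A\Vert \le \Vert A\Vert_{HS}$ absorbs the second term. (One subtlety is that Hanson–Wright is usually stated for real quadratic forms; for complex $\xi$ one either invokes a complex version directly or splits $\xi$ into its real and imaginary parts and $A$ into Hermitian real/imaginary blocks, at the cost of harmless absolute constants.) Combining this with the centering step gives \eqref{ineq:fourthmomenttarget}, and hence \eqref{equ:smallballphaseretrieval}.

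The step I expect to be the main obstacle — or at least the one requiring the most care — is the passage through Hanson–Wright in the complex setting and making sure the dependence on $K$ comes out as $K^8$ rather than something weaker: one must track that each factor of the quadratic form contributes $K^2$ (since $\Vert \xi_i\Vert_{\psi_2} \le K$ and the square of a $\psi_2$ variable is $\psi_1$ with parameter $\lesssim K^2$), so that a fourth power of the quadratic form produces $K^8$. Everything else — Paley–Zygmund, centering, the reduction $\Vert A\Vert \le \Vert A\Vert_{HS}$, and the final algebraic rearrangement — is routine. $\qquad\blacksquare$
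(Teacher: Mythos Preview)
Your proposal is correct and follows essentially the same route as the paper: apply Paley--Zygmund with $\theta=1/2$ to $Y=\vert\xi^*A\xi\vert^2$, reduce to bounding $\mathbb{E}\vert\xi^*A\xi\vert^4$, center by $\trace(A)$, and control the centered fourth moment by integrating the Hanson--Wright tail (then absorb $\Vert A\Vert\le\Vert A\Vert_{HS}$). The paper invokes the Rudelson--Vershynin form of Hanson--Wright without commenting on the complex case, so your remark about handling complex $\xi$ via the real/imaginary split is, if anything, slightly more careful than the original.
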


\begin{proof}
	Note that by the Paley-Zygmund inequality (see, e.g., \cite{decoupling}) we have that for all $ 0 < t \le  \E \vert \xi^* A \xi \vert^2$ 
	\begin{equation}
	\mathbb{P} \left(  \vert  \xi^* A \xi \vert^2 \ge t   \right)  \ge \frac{\left( \E \vert \xi^* A \xi \vert^2  - t  \right)^2}{\mathbb{E}  \vert \xi^* A \xi \vert^4 }.
	\end{equation}
	In particular, setting $t=\E \vert \xi^* A \xi \vert^2/2 $ yields that
	\begin{equation}\label{equ:paleyzygmund_internwef}
	\mathbb{P} \left(  \vert  \xi^* A \xi \vert^2 \ge \frac{\E \vert \xi^* A \xi \vert^2}{2}  \right) \ge \frac{\left( \E \vert \xi^* A \xi \vert^2   \right)^2}{4 \mathbb{E}  \vert \xi^* A \xi \vert^4 }.
	\end{equation}
	To estimate $\mathbb{E}  \vert \xi^* A \xi \vert^4$ from above we note that the triangle inequality yields that
	\begin{equation}\label{ineq:interna1}
	\begin{split}
	\mathbb{E}  \vert \xi^* A \xi \vert^4 &\lesssim  \mathbb{E} \left[ \big\vert \xi^* A \xi  - \E \left[  \xi^* A \xi \right]  \big\vert^4 \right] + \vert \mathbb{E} \xi^* A \xi \vert^4\\
	&=  \mathbb{E} \left[ \big\vert \xi^* A \xi  - \E \left[  \xi^* A \xi \right]  \big\vert^4 \right] +  \left( \trace\left(A\right) \right)^4.
	\end{split}
	\end{equation}
	In order to estimate the first summand we will use that $ \big\vert \xi^* A \xi  - \E \left[  \xi^* A \xi \right]  \big\vert$ has a mixed subgaussian/subexponential tail. 
	We can bound the tail probability using the Hanson-Wright inequality (in the version of \cite{rudelson_hansonwright}), which states that there is a numerical constant $c>0$ such that for all $t>0$ it holds that 
	\begin{equation}
	\mathbb{P} \left(  \vert  \xi^* A \xi - \mathbb{E} \left[ \xi^* A \xi \right] \vert > t   \right) \le 2   \exp \left( -c \ \text{min} \left\{ \frac{t^2}{K^4 \Vert A \Vert^4_{HS}}  , \frac{t}{K^2 \Vert A \Vert}  \right\} \right).
	\end{equation}
	This yields that
	\begin{align}
	\mathbb{E}  \left[ \big\vert \xi^* A \xi - \E \left[ \xi^* A \xi \right]  \big\vert^4 \right] &= 4 \int_{0}^{\infty} t^3 \ \mathbb{P} \left( \big\vert \xi^* A \xi - \E \left[  \xi^* A \xi \right] \big\vert > t  \right)  dt \\
	 &\le 8 \left(     \int_{0}^{\infty} t^3 \  \exp \left( -c  \frac{t^2}{K^4 \Vert A \Vert^2_{HS}} \right) dt     +  \int_{0}^{\infty} t^3 \  \exp \left( -c  \frac{t}{K^2 \Vert A \Vert} \right) dt   \right)\\
	& = 8 \left( K^8 \Vert A \Vert^4_{HS}  \int_{0}^{\infty} u \  \exp \left( -cu^2 \right) du  + K^8 \Vert A \Vert^4    \int_{0}^{\infty} u^3 \  \exp \left( -cu \right) du    \right) \\
	&\lesssim  K^8 \Vert A \Vert^4_{HS},
	\end{align}
	where the third line follows from a change of variables. Combining this inequality chain with (\ref{ineq:interna1}) we obtain that
	\begin{equation}
	\mathbb{E}  \vert \xi^* A \xi \vert^4  \lesssim K^8 \Vert A \Vert^4_{HS} +   \left( \trace\left(A\right) \right)^4  .
	\end{equation}
	Inserting this into (\ref{equ:paleyzygmund_internwef}) finishes the proof. 
\end{proof}
In order to apply Lemma \ref{lemma:PRsmallballprobabilities} we need a lower bound for $\mathbb{E} \left[ \vert \xi^* A \xi \vert^2 \right]$. The next lemma computes this quantity.
\begin{lemma}\label{lemma:PRsecondmoments}
	Let $ \xi = \left( \xi_1, \ldots, \xi_n \right) $ be a random vector with independent and identically distributed entries $ \xi_i  $ taking values in $ \mathbb{C} $ such that $ \mathbb{E} \xi_i = 0 $ and $ \mathbb{E} \vert \xi_i \vert^2 =1 $. Then for all matrices  $ A \in \mathcal{S}^n $ it holds that
	\begin{equation}\label{equ:secondmoment}
	\begin{split}
	&\mathbb{E} \left[ \vert \xi^* A \xi \vert^2 \right]\\
	=& \left( \trace A \right)^2 + \left(  \mathbb{E} \vert \xi_i \vert^4   -1 \right) \sum_{i=1}^{n} A^2_{i,i} + \left( 1 + \big\vert \mathbb{E} \left[ \xi^2_1 \right] \big\vert^2  \right) \sum_{i \ne j} \text{Re} \left( A_{i,j} \right)^2\\
	 &+ \left( 1- \big\vert \mathbb{E}  \left[ \xi^2_1  \right] \big\vert^2  \right) \sum_{i \ne j} \text{Im} \left( A_{i,j} \right)^2.
	\end{split}
	\end{equation}
\end{lemma}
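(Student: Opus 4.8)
The plan is to exploit that $A\in\mathcal{S}^n$ is Hermitian, so that the quadratic form $\xi^*A\xi$ is real-valued and hence $\vert\xi^*A\xi\vert^2=(\xi^*A\xi)^2$. Writing $\xi^*A\xi=\sum_{i,j=1}^n\bar\xi_iA_{ij}\xi_j$, this reduces the claim to a computation of mixed fourth moments:
\begin{equation*}
\E\left[\vert\xi^*A\xi\vert^2\right]=\sum_{i,j,k,l}A_{ij}A_{kl}\,\E\left[\bar\xi_i\xi_j\bar\xi_k\xi_l\right].
\end{equation*}
So the whole proof is an expansion of this sum term by term.

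First I would use independence together with $\E\xi_i=0$ to discard every term in which some index value among $i,j,k,l$ occurs exactly once: such a term factors off an isolated $\E\xi_i=0$ (or $\E\bar\xi_i=0$). The surviving configurations are therefore (a) $i=j=k=l$, and (b) those in which two distinct index values $a\ne b$ each occupy exactly two of the four slots $\bar\xi_i,\xi_j,\bar\xi_k,\xi_l$, of which there are three. I would then evaluate the moment in each case, keeping careful track of which slots carry $\xi$ and which carry $\bar\xi$: case (a) contributes $\E\vert\xi_1\vert^4$; the pairing $\{i=j\},\{k=l\}$ and the pairing $\{i=l\},\{j=k\}$ each produce $\E\vert\xi_a\vert^2\,\E\vert\xi_b\vert^2=1$; and the remaining pairing $\{i=k\},\{j=l\}$ produces $\E[\bar\xi_a^2]\,\E[\xi_b^2]=\overline{\E[\xi_1^2]}\,\E[\xi_1^2]=\vert\E[\xi_1^2]\vert^2$. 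Reading off from each pairing the induced quadratic expression in the entries of $A$, one obtains
\begin{equation*}
\E\left[\vert\xi^*A\xi\vert^2\right]=\E\vert\xi_1\vert^4\sum_{i}A_{ii}^2+\sum_{i\ne k}A_{ii}A_{kk}+\sum_{i\ne j}A_{ij}A_{ji}+\vert\E[\xi_1^2]\vert^2\sum_{i\ne j}A_{ij}^2.
\end{equation*}

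It then remains to rewrite this in the asserted form. I would use $\sum_{i\ne k}A_{ii}A_{kk}=(\trace A)^2-\sum_iA_{ii}^2$; since $A$ is Hermitian, $A_{ji}=\overline{A_{ij}}$, so $A_{ij}A_{ji}=\vert A_{ij}\vert^2=\text{Re}(A_{ij})^2+\text{Im}(A_{ij})^2$, while in $\sum_{i\ne j}A_{ij}^2$ the imaginary cross terms cancel in pairs $(i,j),(j,i)$, leaving $\sum_{i\ne j}A_{ij}^2=\sum_{i\ne j}\big(\text{Re}(A_{ij})^2-\text{Im}(A_{ij})^2\big)$. Collecting the coefficients of $(\trace A)^2$, of $\sum_iA_{ii}^2$ (namely $\E\vert\xi_1\vert^4-1$), of $\sum_{i\ne j}\text{Re}(A_{ij})^2$ (namely $1+\vert\E[\xi_1^2]\vert^2$), and of $\sum_{i\ne j}\text{Im}(A_{ij})^2$ (namely $1-\vert\E[\xi_1^2]\vert^2$) yields exactly \eqref{equ:secondmoment}.

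The only genuinely delicate step is the bookkeeping in the fourth-moment expansion — specifically, distinguishing the two pairings that yield $\E\vert\xi_1\vert^2\,\E\vert\xi_1\vert^2=1$ from the one that yields $\E[\xi_1^2]\,\E[\bar\xi_1^2]=\vert\E[\xi_1^2]\vert^2$. This distinction is invisible in the real case, where all three pairings give the same value, and it is precisely what produces the separate $\text{Re}$ and $\text{Im}$ contributions in \eqref{equ:secondmoment}; everything else is routine algebra.
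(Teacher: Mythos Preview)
Your proposal is correct and follows essentially the same approach as the paper: both proofs expand $\mathbb{E}[|\xi^*A\xi|^2]$ as a fourfold sum, use independence and $\mathbb{E}\xi_i=0$ to kill all terms with a singleton index, evaluate the surviving pairings, and then rewrite using the Hermitian symmetry $A_{ji}=\overline{A_{ij}}$ to obtain the Re/Im decomposition. The only cosmetic differences are that you first note $\xi^*A\xi\in\mathbb{R}$ and hence work with $A_{kl}$ rather than $\overline{A_{i'j'}}$, and that you organise the surviving terms by pairing type whereas the paper first splits into the diagonal block $(i=j,\,i'=j')$ and the off-diagonal block $(i\ne j,\,i'\ne j')$.
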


\begin{proof}
	First, we observe that
	\begin{align}
	\mathbb{E} \left[ \vert \xi^* A \xi \vert^2  \right] &= \mathbb{E} \left[   \left( \sum_{i,j}  A_{i,j} \overline{\xi_i} \xi_j \right) \left(  \sum_{i',j'} \overline{ A_{i',j'}} \xi_{i'} \overline{ \xi_{j'}} \right)   \right] \\
	&= \sum_{i,i',j,j'}  \mathbb{E} \left[ \left(  A_{i,j} \overline{\xi_i} \xi_j \right) \left(   \overline{ A_{i',j'}} \xi_{i'} \overline{ \xi_{j'}} \right)    \right]     \\
	& = \sum_{i,j} \mathbb{E} \left[  \left(  A_{i,i} \vert \xi_i \vert^2  \right)  \left(  A_{j,j} \vert \xi_j \vert^2  \right)   \right] +  \sum_{i\ne j ,i' \ne j'} \mathbb{E} \left[   \left(  A_{i,j} \overline{\xi_i} \xi_j \right) \left(  \overline{ A_{i',j'}} \xi_{i'} \overline{ \xi_{j'}} \right)   \right] \\
	& = \left(I\right)+ \left(II\right),
	\end{align}
	where in the third line we used that $ \mathbb{E} \left[ \xi_i \right] = 0 $ and that the entries of $\xi$ are independent, which implies that there are no summands where one index appears exactly three times. The first summand can be computed by
	\begin{align}
	\left(I\right) &= \sum_i A^2_{i,i} \mathbb{E} \left[ \vert \xi_i \vert^4 \right] + \sum_{i \ne j} A_{i,i} A_{j,j} \mathbb{E} \left[ \vert \xi_i \vert^2 \right] \mathbb{E} \left[ \vert \xi_j \vert^2  \right]\\
	&= \sum^n_{i=1} A^2_{i,i} \mathbb{E} \left[ \vert \xi_i \vert^4 \right] + \sum_{i \ne j} A_{i,i} A_{j,j}\\
	&= \left( \trace A \right)^2 + \left(  \mathbb{E} \left[ \vert \xi_i \vert^4 \right]   -1 \right) \sum_{i=1}^{n} A^2_{i,i}  ,
	\end{align}
	where we have used that $A_{i,i} = \overline{ A_{i,i} } $ for all $ i \in \left[n\right] $ and $ \mathbb{E} \left[ \vert \xi_i \vert^2 \right]  =1$. The second summand can be computed by
	\begin{align}
	(II)&= \sum_{i \ne j, i' \ne j'} A_{i,j} \overline{ A_{i',j'}} \mathbb{E} \left[ \overline{ \xi_i} \xi_j \xi_{i'} \overline{\xi_{j'}}  \right]\\
	&= \sum_{i \ne j, i' \ne j'} A_{i,j}  A_{j',i'} \mathbb{E} \left[ \overline{ \xi_i} \xi_j \xi_{i'} \overline{\xi_{j'}}  \right]\\
	&=  \sum_{i \ne j} A^2_{i,j} \mathbb{E} \left[ \overline{ \xi_i}^2 \right] \mathbb{E} \left[ \xi^2_j  \right] + \sum_{i \ne j} A_{i,j} A_{j,i} \mathbb{E} \left[ \vert \xi_i \vert^2 \right] \mathbb{E} \left[ \vert \xi_j \vert^2  \right]\\
	&=  \sum_{i \ne j} A^2_{i,j}  \Big\vert \mathbb{E} \left[ \xi^2_i \right] \Big\vert^2  + \sum_{i \ne j} \vert A_{i,j} \vert^2\\
	&= \sum_{i \ne j} A^2_{i,j}  \vert \mathbb{E} \left[ \xi^2_i \right] \vert^2  + \sum_{i \ne j} \vert A_{i,j} \vert^2 \\
	&\overset{(a)}{=} \big\vert \mathbb{E} \left[ \xi^2_1 \right] \big\vert^2 \sum_{i \ne j} \left( \text{Re} \left(A_{i,j}\right)^2 - \text{Im} \left(A_{i,j}\right)^2   \right)   + \sum_{i \ne j} \vert A_{i,j} \vert^2\\
	&=  \left( 1 + \big\vert \mathbb{E} \left[ \xi^2_1 \right] \big\vert^2  \right) \sum_{i \ne j} \text{Re} \left( A_{i,j} \right)^2 + \left( 1- \big\vert \mathbb{E}  \left[ \xi^2_1  \right] \big\vert^2  \right) \sum_{i \ne j} \text{Im} \left( A_{i,j} \right)^2.
	\end{align}
	For equation $(a)$ we used the observation that
	\begin{equation}	
	 A^2_{i,j}+A^2_{j,i} = A^2_{i,j} + \overline{ A_{i,j}}^2  = 2 \text{Re} \left(A_{i,j}\right)^2 - 2 \text{Im} \left(A_{i,j}\right)^2.
	 \end{equation}
	 By summing up $(I)$ and $(II)$ we obtain equality (\ref{equ:secondmoment}).

\end{proof}
The lemmas above would allows us to find a lower bound for $Q_{\mathcal{Z}} \left(2u\right) $ in Lemma \ref{thm:smallballmethod}. We still need an upper bound for the Rademacher complexity $\mathbb{E} \left[ \underset{Z \in  \mathcal{Z} }{\sup}  \Big\vert \sum_{i=1}^{m}\varepsilon_i  \innerproduct{  \xi_i \xi^*_i, Z } \Big\vert   \right]$. The next lemma provides such a bound. In \cite{kueng2017low} a version of this lemma has already been presented. Nevertheless, we include a proof for completeness.
\begin{lemma}\label{lemma:PRmeanwidth} Assume that $m \ge C_1n$. Let $\alpha >0 $, $0 < \mu \le 1 $ and set $\mathcal{Z}:= \mualphaone \cap \left\{ Z \in \mathcal{S}^n: \ \Vert Z \Vert_{HS}=1  \right\}$. Then we have that
	\begin{equation}
	\mathbb{E} \left[ \underset{Z \in  \mathcal{Z} }{\sup}  \Big\vert  \innerproduct{Z , \sum_{i=1}^{m} \varepsilon_i \xi_i \xi^*_i } \Big\vert   \right] \le C_2 \left(1+\frac{1}{\alpha}\right) K^2  \sqrt{mn}.
	\end{equation}
	$C_1$ and $C_2$ are absolute constants.
\end{lemma}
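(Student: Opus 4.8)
The plan is to exploit that the matrices in the relevant set are approximately rank one, so that the Rademacher complexity collapses to an operator‑norm estimate. Here the set $\mathcal{Z}$ in question is the slice $\mualphatwo\cap\{Z\in\mathcal{S}^n:\ \Vert Z\Vert_{HS}=1\}$ to which the small‑ball inequality (Lemma~\ref{thm:smallballmethod}) is applied, and Lemma~\ref{lemma:key1}(1) tells us that every such $Z$ satisfies $\Vert Z\Vert_1\le(1+\alpha^{-1})\Vert Z\Vert_{HS}=1+\alpha^{-1}$. Writing $W:=\sum_{i=1}^m\varepsilon_i\xi_i\xi_i^*$ and using that, for Hermitian matrices, the operator norm is dual to $\Vert\cdot\Vert_1$ with respect to $\innerproduct{\cdot,\cdot}$, I would simply estimate
\[
\sup_{Z\in\mathcal{Z}}\big|\innerproduct{Z,W}\big|\ \le\ \sup_{\Vert Z\Vert_1\le 1+\alpha^{-1}}\big|\innerproduct{Z,W}\big|\ =\ (1+\alpha^{-1})\,\Vert W\Vert ,
\]
so that the lemma reduces to the (distribution‑free) claim $\E\,\Vert W\Vert\lesssim K^2\sqrt{mn}$.

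To prove that claim I would first centre the summands. Put $Y_i:=\xi_i\xi_i^*-\Id$ and $S:=\sum_{i=1}^m\varepsilon_i$; then $W=\sum_{i=1}^m\varepsilon_iY_i+S\,\Id$, whence $\E\Vert W\Vert\le\E\big\Vert\sum_i\varepsilon_iY_i\big\Vert+\E|S|$, and $\E|S|\le(\E S^2)^{1/2}=\sqrt m\le\sqrt{mn}$. Since the $Y_i$ are independent and mean zero, the standard desymmetrisation inequality (introduce an independent copy $Y_i'$ and use that $\sum_i\varepsilon_i(Y_i-Y_i')$ has the same law as $\sum_i(Y_i-Y_i')$) gives $\E\big\Vert\sum_i\varepsilon_iY_i\big\Vert\le 2\,\E\big\Vert\sum_iY_i\big\Vert=2\,\E\big\Vert\sum_i\xi_i\xi_i^*-m\,\Id\big\Vert$.

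It then remains to bound the centred empirical covariance. Since the $\xi_i$ are i.i.d., isotropic and subgaussian with parameter $K$, the sharp form of the subgaussian covariance estimate — the same \cite[Theorem 4.6.1]{vershynin2016high} already invoked in the proof of Lemma~\ref{lemma:oneestimate} — yields, under $m\ge C_1 n$, that $\E\big\Vert\tfrac1m\sum_i\xi_i\xi_i^*-\Id\big\Vert\lesssim K^2\sqrt{n/m}$ (the high‑probability version stated there becomes this bound in expectation once one integrates its sub‑exponential tail). Hence $\E\big\Vert\sum_i\xi_i\xi_i^*-m\,\Id\big\Vert\lesssim K^2\sqrt{mn}$, and combining the three displays gives $\E\Vert W\Vert\lesssim K^2\sqrt{mn}$, which is what we needed (after enlarging $C_1$, if necessary, to stay within the valid range of that covariance estimate).

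The only genuine difficulty is getting the exponent right, namely $\sqrt{mn}$ with no logarithmic factor: a direct application of the matrix Khintchine inequality to the Rademacher matrix series $\sum_i\varepsilon_i\xi_i\xi_i^*$ produces the weaker $\sqrt{mn\log n}$, which would be too lossy when fed into Mendelson's small‑ball bound. The desymmetrisation step together with the $\log$‑free subgaussian covariance estimate is precisely what removes the stray $\sqrt{\log n}$. An alternative, closer to \cite{kueng2017low}, is to bound the Rademacher complexity of the approximately‑rank‑one set $\{Z:\Vert Z\Vert_1\le 1+\alpha^{-1},\ \Vert Z\Vert_{HS}\le 1\}$ by a chaining / Gaussian‑width argument — the Gaussian width of this set being $\lesssim(1+\alpha^{-1})\sqrt n$ — but the route sketched above is shorter. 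Converting the covariance estimate from a high‑probability statement to a bound in expectation is the only remaining, entirely routine, point.
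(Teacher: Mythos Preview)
Your reduction to the operator‑norm bound via $\Vert Z\Vert_1\le(1+\alpha^{-1})$ is exactly what the paper does in its first display. From that point on, however, the two arguments diverge. The paper bounds $\E\,\Vert\sum_i\varepsilon_i\xi^{(i)}(\xi^{(i)})^*\Vert$ directly: it passes to a $\tfrac14$--net of $S^{n-1}$, observes that each $z_i=\varepsilon_i\vert\langle\xi^{(i)},x\rangle\vert^2$ is centred and sub‑exponential with $\Vert z_i\Vert_{\psi_1}\lesssim K^2$, applies Bernstein's inequality pointwise, takes a union bound over the net, and then integrates the resulting mixed tail to obtain $\E\Vert W\Vert\lesssim K^2\sqrt{mn}$. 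Your route---centre by $\Id$, desymmetrise via an independent copy, and invoke the subgaussian covariance estimate \cite[Theorem~4.6.1]{vershynin2016high}---is a legitimate and somewhat shorter alternative: you are effectively outsourcing the net/Bernstein computation to a theorem the paper already cites elsewhere. The trade‑off is that the paper's argument is self‑contained and yields the tail of $\Vert W\Vert$ explicitly, whereas yours needs the extra (routine, as you say) step of converting the high‑probability covariance bound into an expectation bound by integrating its sub‑exponential tail. Both avoid the spurious $\sqrt{\log n}$ that a naive matrix Khintchine application would incur, which is the point you rightly emphasise.
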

\begin{proof}
	First, we note that by Hoelder's inequality and Lemma \ref{lemma:key1}  we obtain that
	\begin{equation}\label{ineq:interno1}
	\begin{split}
	\mathbb{E} \left[ \underset{Z \in  \mathcal{Z}}{\sup} \Big\vert  \innerproduct{Z , \sum_{i=1}^{m} \varepsilon_i  \xi^{\left(i\right)} \left(  \xi^{\left(i\right)}\right)^* }  \Big\vert  \right] & \le \left(   \underset{Z \in \mathcal{Z}}{\sup}  \Vert Z \Vert_1 \right)      \mathbb{E} \left[ \Big\Vert \sum_{i=1}^{m} \varepsilon_i  \xi^{\left(i\right)} \left(  \xi^{\left(i\right)}\right)^*  \Big\Vert \right]  \\
	&\le \left(1 + \frac{1}{\alpha} \right) \mathbb{E} \left[ \Big\Vert \sum_{i=1}^{m} \varepsilon_i  \xi^{\left(i\right)} \left(  \xi^{\left(i\right)}\right)^* \Big\Vert \right].
	\end{split}
	\end{equation}
	To bound $ \mathbb{E} \left[ \Big\Vert  \sum^m_{i=1} \varepsilon_i \xi^{\left(i\right)} \left(  \xi^{\left(i\right)}\right)^*   \Big\Vert  \right] $ let $ \mathcal{N} $ be an $\frac{1}{4}$-covering of the unit sphere $S^{n-1} \subset \mathbb{R}^n$ with respect to the Euclidean norm such that  
	\begin{equation}\label{ineq:unionbound}
	\vert \mathcal{N} \vert \le 12^n.
	\end{equation}
	By \cite[Lemma 5.4]{vershynin2010introduction} we have that
	\begin{equation}\label{ineq:PRepsnet}
	\Big\Vert  \sum_{i=1}^{m} \varepsilon_i  \xi^{\left(i\right)} \left(  \xi^{\left(i\right)}\right)^*   \Big\Vert \le 2 \underset{x \in \mathcal{N}}{\sup} \Big\vert  \langle  x, \sum_{i=1}^{m} \varepsilon_i  \xi^{\left(i\right)} \left(  \xi^{\left(i\right)}\right)^*  x \rangle \Big\vert.
	\end{equation}
	Fix $ x \in \mathcal{N} $ and observe that
	\begin{equation}
	\langle x, \sum_{i=1}^{m} \varepsilon_i  \xi^{\left(i\right)} \left(  \xi^{\left(i\right)}\right)^*  x \rangle = \sum_{i=1}^{m} \varepsilon_i \vert \langle  \xi^{\left(i\right)} ,x \rangle \vert^2 = \sum_{i=1}^{m} z_i,
	\end{equation}
	where we have set $z_i :=  \varepsilon_i  \vert \langle \xi_i ,x \rangle  \vert^2 $. We observe that $ \mathbb{E} \left[z_i\right] = 0 $ and, moreover,
	\begin{equation*}
	\Vert z_i  \Vert_{\psi_1} = \Vert  \langle  \xi^{\left(i\right)} ,x \rangle^2  \Vert_{\psi_1} \lesssim  \Vert  \langle  \xi^{\left(i\right)} ,x \rangle  \Vert^2_{\psi_2} \lesssim K^2 ,
	\end{equation*}
	where the first equality follows directly from the definition of the $\Vert \cdot \Vert_{\psi_1} $-norm. \footnote{For the definition of the $\Vert \cdot \Vert_{\psi_1}$-norm see, e.g., \cite[Section 5.2.4]{vershynin2010introduction}.} The first inequality can be seen using \cite[Lemma  2.7.6]{vershynin2016high} and the second one using \cite[Lemma 3.4.2]{vershynin2016high}. By the Bernstein inequality (see, e.g., \cite[Theorem 2.8.1]{vershynin2016high}) we obtain that
	\begin{equation}\label{ineq:intern5}
	\mathbb{P} \left(  \Big\vert  \langle x, \sum_{i=1}^{m} \varepsilon_i \xi^{\left(i\right)} \left(  \xi^{\left(i\right)}\right)^*  x \rangle  \Big\vert   \ge t \right) = \mathbb{P} \left(  \Big\vert \sum_{i=1}^{m} z_i \Big\vert   \ge t \right)  \le 2 \exp \left( -c \min \left\{ \frac{t^2}{mK^4}; \frac{t}{K^2} \right\}  \right) ,
	\end{equation}
	where $ c>0 $ is some numerical constant. It follows from (\ref{ineq:unionbound}), (\ref{ineq:PRepsnet}), (\ref{ineq:intern5}), and a union bound that
	\begin{align}
	\mathbb{P} \left(  \Big\Vert  \sum_{i=1}^{m} \varepsilon_i  \xi^{\left(i\right)} \left(  \xi^{\left(i\right)}\right)^*   \Big\Vert   \ge t \right) & \overset{(\ref{ineq:PRepsnet})}{\le}   	\mathbb{P} \left(   2 \underset{x \in \mathcal{N}}{\sup} \Big\vert  \langle  x, \sum_{i=1}^{m} \varepsilon_i  \xi^{\left(i\right)} \left(  \xi^{\left(i\right)}\right)^*   x \rangle \Big\vert  \ge t \right)   \\
	& \le  \sum_{x\in \mathcal{N}}    	\mathbb{P} \left(   2 \Big\vert  \langle  x, \sum_{i=1}^{m} \varepsilon_i  \xi^{\left(i\right)} \left(  \xi^{\left(i\right)}\right)^*   x \rangle \Big\vert  \ge t \right)    \\
	& \overset{(\ref{ineq:unionbound}), (\ref{ineq:intern5})}{\le} 2  \cdot 12^{n}  \exp \left( -c' \min \left\{ \frac{t^2}{mK^4}; \frac{t}{K^2} \right\}  \right)\\
	&   = 2 \exp \left( \tilde{c} n -c' \min \left\{ \frac{t^2}{mK^4} ;  \frac{t}{K^2}  \right\}   \right),
	\end{align}
	where $\tilde{c} = \log 12 $. Then, whenever $ m \ge \frac{\tilde{c}}{c} n $, we obtain that
	\begin{align}
	&\mathbb{E}  \left[  \Big\Vert  \sum_{i=1}^{m} \varepsilon_i  \xi^{\left(i\right)} \xi^*_i  \Big\Vert \right]\\
	=& \int_{0}^{\infty} \mathbb{P} \left(  \Big\Vert  \sum_{i=1}^{m} \varepsilon_i  \xi^{\left(i\right)} \left(  \xi^{\left(i\right)}\right)^*  \Big\Vert   \ge t \right)  dt \\
	\le&   \int_{0}^{   K^2 \sqrt{\frac{\tilde{c}}{c'} nm }  } 1 dt  + 2 \int_{ K^2 \sqrt{\frac{\tilde{c}}{c'} nm }   }^{ mK^2  }  \exp \left( \tilde{c} n -  \frac{c' t^2}{mK^4} \right)  dt + 2 \int_{ mK^2 }^{ \infty}   \exp \left( \tilde{c} n - c' \frac{t}{K^2}  \right)  dt\\
	\lesssim & K^2 \sqrt{nm} + \exp \left(\tilde{c} n\right) \left(  \int_{ K^2 \sqrt{\frac{\tilde{c}}{c'} nm }   }^{  \infty }  \exp \left(  -  \frac{c't^2}{mK^4} \right)  dt +    \int_{ mK^2 }^{ \infty}   \exp \left( - \frac{c't}{K^2}  \right)  dt    \right).
	\end{align}
	In order to finish we need to estimate the two integrals.  By a change of variables and  \cite[Lemma C.7]{foucart2013mathematical} we obtain that
	\begin{align}
	\int_{ K \sqrt{\frac{\tilde{c}}{c} nm }   }^{  \infty }  \exp \left(  -  \frac{c' t^2}{mK^4} \right)  dt &= \frac{\sqrt{m}K^2}{\sqrt{2c'}} \int_{\sqrt{2 \tilde{c} n}}^{\infty}  \exp \left(  \frac{-t^2}{2} \right) dt \lesssim \sqrt{m} K^2 \exp \left( -\tilde{c} n \right),\\
	\int_{mK^2 }^{ \infty}   \exp \left( - \frac{c't}{K^2}  \right)  dt   &= \frac{K^2}{c'} \exp \left( -c'm\right).
	\end{align}
	Inserting this in the inequality chain above yields that
	\begin{equation}
	\mathbb{E}  \left[  \Big\Vert  \sum_{i=1}^{m} \varepsilon_i \xi^{(i)} \left(  \xi^{\left(i\right)}\right)^*   \Big\Vert \right] \lesssim K^2 \sqrt{nm}.
	\end{equation}
	Combined with inequality (\ref{ineq:interno1}) this finishes the proof.
\end{proof}
Now we have gathered all the ingredients to complete the proof.
\begin{proof}[Proof of Lemma \ref{lemma:smallballapplied} and Lemma \ref{lemma:smallballapplied_variant}]
We will start by showing that $\mathbb{E} \left[ \vert \innerproduct{\xi \xi^*, Z }   \vert^2 \right] \gtrsim \beta \Vert Z \Vert^2_{HS} $ for all $ Z \in \mualphatwo$ in the case of Lemma \ref{lemma:smallballapplied}, or all $ Z \in \mualphatwo \cap \mathbb{R}^{n \times n} $ in the case of Lemma \ref{lemma:smallballapplied_variant}, respectively.

\noindent We first consider the second case and assume that the condition $\mathbb{E} \left[ \vert \xi_i \vert^4 \right] \ge 1+ \beta $ is satisfied for some $ \beta>0 $. By Lemma \ref{lemma:PRsecondmoments} we obtain that for all $Z \in \mathcal{M}_{2,\mu,\alpha} $ under the conditions of Lemma \ref{lemma:smallballapplied}
\begin{equation}\label{ineq:chain3}
\begin{split}
&\mathbb{E} \left[ \vert  \innerproduct{\xi \xi^*, Z }   \vert^2 \right]\\
=& \left( \trace Z\right)^2 + \left(  \mathbb{E} \vert \xi_i \vert^4   -1 \right) \sum_{i=1}^{n} Z^2_{i,i} + \left( 1 + \big\vert \mathbb{E} \left[ \xi^2_1 \right] \big\vert^2  \right) \sum_{i \ne j} \text{Re} \left( Z_{i,j} \right)^2\\
&+ \left( 1- \big\vert \mathbb{E}  \left[ \xi^2_1  \right] \big\vert^2  \right) \sum_{i \ne j} \text{Im} \left( Z_{i,j} \right)^2\\
\ge& \beta \sum_{i=1}^{n} Z^2_{i,i}   + \beta \Vert Z-\diagonal{Z} \Vert^2_{HS}\\
=& \beta \Vert Z \Vert^2_{HS}.
\end{split}
\end{equation}
Under the assumptions of Lemma \ref{lemma:smallballapplied_variant} we observe that $\sum_{i \ne j}  \text{Im} \left( Z_{i,j} \right)^2=0  $ and $\sum_{i \ne j}  \text{Re} \left( Z_{i,j} \right)^2 = \Vert Z - \diagonal{Z} \Vert^2_{HS} $. Hence, a similar argument as before also leads to
\begin{equation}
\mathbb{E} \left[ \vert  \innerproduct{\xi \xi^*, Z }   \vert^2 \right] \ge  \beta \Vert Z \Vert^2_{HS}.
\end{equation}
 Under the first assumption  we obtain by Lemma \ref{lemma:PRsecondmoments} that for all $Z \in \mualphatwo $
\begin{equation}\label{ineq:chain1}
\begin{split}
&\mathbb{E} \left[ \vert  \innerproduct{\xi \xi^*, Z }   \vert^2 \right]\\
=& \left( \trace Z\right)^2 + \left(  \mathbb{E} \vert \xi_i \vert^4   -1 \right) \sum_{i=1}^{n} Z^2_{i,i} + \left( 1 + \big\vert \mathbb{E} \left[ \xi^2_1 \right] \big\vert^2  \right) \sum_{i \ne j} \text{Re} \left( Z_{i,j} \right)^2\\
&+ \left( 1- \big\vert \mathbb{E}  \left[ \xi^2_1  \right] \big\vert^2  \right) \sum_{i \ne j} \text{Im} \left( Z_{i,j} \right)^2\\
\ge&  \beta  \sum_{i \ne j} \text{Re} \left( Z_{i,j} \right)^2 + \beta  \sum_{i \ne j} \text{Im} \left( Z_{i,j} \right)^2  \\
\ge& \beta \Vert Z - \diagonal{Z} \Vert^2_{HS}.
\end{split}
\end{equation}
Similarly, under the assumptions of Lemma \ref{lemma:smallballapplied_variant} we can again use that $\sum_{i \ne j}  \text{Im} \left( Z_{i,j} \right)^2=0  $ and $\sum_{i \ne j}  \text{Re} \left( Z_{i,j} \right)^2 = \Vert Z - \diagonal{Z} \Vert^2_{HS} $ to obtain by an analogous argument that
\begin{equation}
\mathbb{E} \left[ \vert  \innerproduct{\xi \xi^*, Z }   \vert^2 \right] \ge  \beta \Vert Z - \diagonal{Z} \Vert^2_{HS}.
\end{equation}
The remainder of the proof will be the same for Lemma \ref{lemma:smallballapplied} and Lemma \ref{lemma:smallballapplied_variant}. By Lemma \ref{lemma:key1} we have that 
\begin{align}
\Vert \diagonal{Z} \Vert_{HS} &\le \left(  \sqrt{1-\frac{1}{\left(1+ \alpha^{-1}\right)^2 }}   + 3 \mu   \right) \Vert Z \Vert_{HS}\\
&\le \left(  0.9  + \frac{1}{27}  \right) \Vert Z \Vert_{HS}\\
&\le 0.99 \Vert Z \Vert_{HS}.
\end{align}
By the triangle inequality it follows that
\begin{align}
\Vert Z -\diagonal{Z} \Vert_{HS} &\ge \Vert Z \Vert_{HS} - \Vert \diagonal{Z} \Vert_{HS}\\
&\ge \frac{1}{\tilde{C}} \Vert Z \Vert_{HS}
\end{align}
for $\tilde{C}=100$. Inserting into (\ref{ineq:chain1}) one obtains that
\begin{equation}
\mathbb{E} \left[ \vert \innerproduct{\xi \xi^*, Z }   \vert^2 \right] \ge \frac{\beta}{\tilde{C}^2} \Vert Z \Vert^2_{HS}.
\end{equation}
Hence, we have shown in all cases that $\mathbb{E} \left[ \vert  \innerproduct{\xi \xi^*, Z }   \vert^2 \right] \ge \frac{\beta}{\tilde{C}^2} \Vert Z \Vert^2_{HS} $.\\

\noindent By Lemma \ref{lemma:PRsmallballprobabilities} it follows that for all $Z\in \mualphatwo $
\begin{align}
\mathbb{P} \left( \big\vert \innerproduct{\xi \xi^*, Z }   \big\vert \ge \frac{ \sqrt{ \beta}  \Vert Z \Vert_{HS}}{\sqrt{2}\tilde{C} }  \right)  &\ge \mathbb{P} \left( \big\vert \innerproduct{\xi \xi^*, Z }   \big\vert^2 \ge \frac{ \mathbb{E} \left[ \vert \innerproduct{\xi \xi^* , Z} \vert^2 \right]  }{2  }  \right)\\
&\ge \frac{\beta^2 \Vert Z \Vert_{HS}^4}{ \tilde{C}^4 C \left( K^8 \Vert Z \Vert^4_{HS}  + \left( \trace \left( Z \right)  \right)^4  \right)}.
\end{align}
Note that for all $Z \in \mualphatwo $
\begin{align}
 \vert \trace \left(Z\right) \vert^4 \le& \Vert Z \Vert^4_1 \overset{ \left( \ref{ineq:lowrank}  \right)  }{\lesssim} \Vert Z \Vert^4_{HS},
\end{align}
where in the last inequality we also used that $ \alpha = \frac{4}{5} $. This shows that for all $Z \in \mualphatwo $ it holds that
\begin{equation}
\mathbb{P} \left( \big\vert \innerproduct{\xi \xi^*, Z }   \big\vert \ge \frac{ \sqrt{\beta} \Vert Z \Vert_{HS}}{\sqrt{2}\tilde{C} }  \right)  \gtrsim \frac{\beta^2}{K^8},
\end{equation}
where we used that $ K \gtrsim 1 $ due to (\ref{lemma:Kbound}). Now recall that $ \mathcal{Z}:= \mualphatwo  \cap  \left\{ Z\in \mathcal{S}^n: \ \Vert Z \Vert_{HS} =1  \right\}  $. Thus we have shown that
\begin{equation}
 Q_{\mathcal{Z}} \left(2u\right) \ge \frac{\beta^2}{C''K^8},
\end{equation}
where $Q_{\mathcal{Z}} \left(\cdot \right)$ is defined in (\ref{definition:Q}). We used that $ u= \frac{ \sqrt{\beta}}{2\sqrt{2} \tilde{C}}  $ and $C''>0$ is a constant chosen large enough. From Lemma \ref{lemma:PRmeanwidth} it follows that
\begin{equation}
	\mathbb{E} \left[ \underset{Z \in  \mathcal{Z} }{\sup}  \Big\vert \frac{1}{m}  \innerproduct{Z , \sum_{i=1}^{m} \varepsilon_i \xi^{\left(i\right)} \left(  \xi^{\left(i\right)}\right)^*  } \Big\vert   \right] \lesssim  K^2  \sqrt{\frac{n}{m}}.
\end{equation}
Combining this inequality with our choice of $u$ and choosing the constant in assumption (\ref{ineq:condition2}) large enough it follows that
\begin{align}
\frac{1}{u}\mathbb{E} \left[ \underset{Z \in \mathcal{Z}}{\sup} \Big\vert  \frac{1}{m} \sum_{i=1}^{m} \varepsilon_i \innerproduct{ \xi^{\left(i\right)} \left(  \xi^{\left(i\right)}\right)^*  , Z }   \Big\vert \right] \le \frac{\beta^2}{8C''K^8}.
\end{align}
Applying Lemma \ref{thm:smallballmethod} yields that with probability at least $1-2\exp\left(-2t^2\right) $
\begin{align}
\underset{Z \in \mathcal{Z}}{\inf}   \left( \frac{1}{m} \sum_{i=1}^{m} \big\vert \innerproduct{\xi^{\left(i\right)} \left(  \xi^{\left(i\right)}\right)^* , Z} \big\vert \right) &\gtrsim  \sqrt{\beta} \left(  Q_{\mathcal{Z}} \left(2u\right)  - \frac{4}{u} \mathbb{E} \left[ \underset{Z \in \mathcal{Z}}{\sup} \Big\vert  \frac{1}{m} \sum_{i=1}^{m} \varepsilon_i \innerproduct{ \xi^{\left(i\right)}   \left(  \xi^{\left(i\right)}\right)^*    , Z }   \Big\vert \right]       -\frac{t}{\sqrt{m}}   \right)\\
&\ge \sqrt{\beta} \left(  \frac{\beta^2}{2C''K^8} -  \frac{t}{\sqrt{m}}   \right).
\end{align}
Setting $t= \frac{\sqrt{m} \beta^2 }{4 C'' K^8} $ it follows that with probability at least $1- 2 \exp \left( \frac{-m\beta^4}{8 (C'')^2 K^{16}}   \right) $ it holds that
\begin{equation}
\underset{Z \in \mathcal{Z}}{\inf}   \left( \frac{1}{m} \sum_{i=1}^{m} \big\vert \innerproduct{\xi^{\left(i\right)} \left(  \xi^{\left(i\right)}\right)^* , Z} \big\vert \right) \gtrsim \frac{\beta^{5/2}}{K^8}.
\end{equation}
Hence, by the definition of $\mathcal{A}$ and $\mathcal{Z}$ it follows that
\begin{equation}\label{ineq:intern78}
\underset{Z \in \mualphatwo}{\inf} \frac{\Vert \mathcal{A} \left(Z\right) \Vert_{\ell_1}}{\Vert Z \Vert_{HS}} \gtrsim  \frac{\beta^{5/2}}{K^8} m .
\end{equation}
Due to $\alpha=\frac{4}{5} $ and Lemma \ref{lemma:key1} we have that $\Vert Z \Vert_{1} \le \frac{9}{4} \Vert Z \Vert_{HS} $ for all $Z \in \mualphatwo $. Combined with (\ref{ineq:intern78}) this shows (\ref{ineq:key22}).
\end{proof}

\section*{Acknowledgements}

This work has been supported by the German Science Foundation (DFG) in the context of the joint project {\em Bilinear Compressed Sensing} (KR 4512/2-1) as part of the Priority Program 1798 as well as the Emmy Noether Junior Research Group {\em Randomized Sensing of Signals and Images} (KR 4512/1-1). Furthermore, the authors want to thank Peter Jung for inspiring discussions.

\bibliographystyle{abbrv}
\bibliography{literature}

\end{document}